\newtheorem{thm}{Theorem}[section]
\newtheorem{rem}[thm]{Remark}
\newtheorem{definition}[thm]{Definition}
\newtheorem{theorem}{Theorem}[section]
\newtheorem{proposition}[thm]{Proposition}
\numberwithin{equation}{section}
\def \F {\mathcal F}
\def \G {\mathcal G}
\def \L {\mathcal L}
\def \P {\mathbf P}
\def \I {{\mathbf 1}}
\def \R {\mathbb R}
\def \bF {\mathbb F}
\def \bG {\mathbb G}
\def \bN {\mathbb N}
\def \bpi {\boldsymbol{\pi}}
\newcommand{\ho}{h^{(1)}}
\newcommand{\hd}{h^{(2)}}
\newcommand{\hm}{h^{(m)}}
\newcommand{\balpha}{\boldsymbol{\alpha}}
\newcommand{\bmu}{\boldsymbol{\mu}}
\newcommand{\ud}{\mathrm d}
\newcommand{\esp}[2][\mathbb E] {#1\left[#2\right]}
\newcommand{\p}{\mathbf p}
\begin{document}

\title{Optimal Convergence Trading with Unobservable Pricing Errors}

\author[S.~ALTAY]{S\"{u}han Altay}
\address[S.~ALTAY]{Department of Financial and Actuarial Mathematics, TU Wien, Wiedner Hauptstrasse 8-10, 1040 Vienna, Austria.}
\email{altay@fam.tuwien.ac.at}
\author[K.~COLANERI]{Katia Colaneri}
\address[K.~COLANERI]{School of Mathematics, University of Leeds,  LS2 9JT Leeds, UK.}
\email{k.colaneri@leeds.ac.uk}
\author[Z.~EKSI]{Zehra Eksi}
\address[Z.~EKSI]{Institute for Statistics and Mathematics, WU-University of Economics and Business, Welthandelsplatz 1, 1020, Vienna, Austria.}
\email{zehra.eksi@wu.ac.at}

\date{}

\subjclass[2010]{91G10, 91G80}
\keywords{optimal control, convergence trade, regime-switching, partial information}%

\begin{abstract}
We study a dynamic portfolio optimization problem related to convergence trading, which is an investment strategy that exploits temporary mispricing by simultaneously buying relatively underpriced assets and selling short relatively overpriced ones with the expectation that their prices converge in the future. We build on the model of \citet{liu2013optimal} and extend it by incorporating unobservable Markov-modulated pricing errors into the price dynamics of two co-integrated assets. We characterize the optimal portfolio strategies in full and partial information settings both under the assumption of unrestricted and beta-neutral strategies. By using the innovations approach, we provide the filtering equation that is essential for solving the optimization problem under partial information.
Finally, in order to illustrate the model capabilities, we provide an example with a two-state Markov chain.
\end{abstract}

\maketitle \markboth{}{}
\renewcommand{\theequation}{\arabic{section}.\arabic{equation}}
\pagenumbering{arabic}

\section{Introduction}
Convergence-type trading strategies have become one of the most popular trading strategies that are used to capitalize on market inefficiencies, or deviations from ``equilibrium,'' especially with the rapid developments in algorithmic and high-frequency trading. For a typical convergence trade, temporary mispricing is exploited by simultaneously buying relatively underpriced assets and selling short relatively overpriced assets in anticipation that at some future date their prices will have become closer. Thus one can profit by the extent of the convergence. A prime example of a convergence trade is a pairs trading strategy that involves a long position and a short position in a pair of similar stocks that have moved together historically and hence an investor can profit from the relative value trade arising from the cointegration between asset price dynamics involved in the trade. Other examples of convergence-type trading strategies can be given as risk arbitrage (known also as merger arbitrage) that speculates on successful completion of a merger of two companies or cash and carry trade that tries to benefit from pricing inefficiencies between spot market and futures market of the same underlying stock or commodity by simultaneously placing opposite bets on spot and futures markets.

In this work, we extend the convergence trade model given by \citet{liu2013optimal} that investigate the dynamic optimal portfolio allocation via expected utility maximization from terminal wealth, with two co-integrated assets with pricing errors and a market index. \citet{liu2013optimal} show that under recurring and non-recurring ``arbitrage'' opportunities, optimal portfolio allocations could deviate from conventional long-short delta-neutral strategies and it can be optimal to hold both risky assets long (or short) at the same time. We extend \citet{liu2013optimal} mainly in two directions. First, we assume that pricing errors related to the co-integrated assets are modulated by a continuous-time, finite-state Markov chain, which is taken to be unobservable and hence needs to be filtered out. Taking pricing errors (or ``alphas'' as commonly referred in finance literature) dependent on a hidden Markov chain captures certain salient features of convergence trade. Although most of the existing literature assumes that pricing errors are fully observable, in reality, those errors, albeit having a stochastic nature, cannot be known precisely or may depend on some unobservable state variables that change accordingly to certain factors in the economy or the market. By modeling those pricing errors depending on an unobservable regime-switching factor, we would like to build a more realistic representation for convergence trading. The second extension of our model is that we allow capital asset pricing model (CAPM) betas of two risky assets to be different. That enables us to show the optimal portfolio allocation for beta-neutral pairs trading, which is designed to keep the portfolio's beta zero all the time and hence achieve market neutrality. It is a common market practice among pairs traders to have a beta-neutral portfolio to avoid market risk. Moreover, allowing different betas also enables us to account for ``betting against beta'' strategies that involve going short with high-beta stocks and going long with low-beta ones. Betting against beta type strategies are often associated with fluctuating low alpha \citet{frazzini2014betting}, which also justifies our choice of modeling pricing errors under regime switching and partial information.

There is a growing stream of literature about optimal convergence trading. \citet{liu2003losing} has a partial equilibrium examination of convergence trading strategies, where the mispricing is modeled using a Brownian bridge. \citet{jurek2007dynamic} incorporate an Ornstein--Uhlenbeck process to model the spread for non-myopic investors and solve the dynamic portfolio allocation for constant relative risk aversion and recursive Epstein--Zin utility function. By building on the results of \citet{jurek2007dynamic}, \citet{liu2013optimal} solve a similar problem by focusing both on recurring and non-recurring arbitrage opportunities in a continuous error-correction model with two co-integrated assets and a market index. \citet{lei2015costly} extend \citet{liu2013optimal} by incorporating transaction costs. Inspired by the dynamic pairs trading model of \citet{mudchanatongsuk2008optimal}, \citet{tourin2013dynamic} develop an optimal portfolio strategy to invest in two risky assets and the money market account, assuming that log-prices are co-integrated, and solve the optimal portfolio allocation problem for the exponential utility. \citet{cartea2016algorithmic} extend \citet{tourin2013dynamic} to allow the investor to trade in multiple co-integrated assets. \citet{chiu2011mean} investigates the optimal dynamic trading of cointegrated assets using the classical mean-variance portfolio selection criterion. \citet{angoshtari2016market} studies the necessary and sufficient conditions for well-posedness and no-arbitrage for the model of \citet{liu2013optimal} by focusing on the concept of investor nirvana.

Considering similar problems under regime-switching and/or partial information in the literature, studies focusing on the dynamic portfolio choice problem are rather limited. \citet{lee2016pairs} solve the optimal pairs trading problem within a power utility setting, where the drift uncertainty is modeled by a continuous Gaussian mean-reverting process and necessitates Kalman filtering to extract the unobservable state process. \citet{altay2018pairs} extend the pairs trading model of \citet{mudchanatongsuk2008optimal} by incorporating regime switching under partial information and risk penalization. However, classical portfolio selection problems, which do not cover portfolios involving co-integrated assets, in a full or partial information and/or Markov regime-switching framework  can be found, for example, in \citet{zhou2003markowitz}, \citet{bauerle2004port}, and \citet{sotomayor2009explicit} for the full information case with Markov regime switching or \citet{rieder2005portfolio}, \citet{bjork2010optimal} and \citet{frey2012portfolio} for the partial information case.

In summary, we have the following key contributions. First, we compute the optimal unrestricted and beta-neutral strategies both in full and partial information settings for a log-utility trader by using dynamic programming. Second, we characterize the value function as the unique (classical) solution of the Hamilton--Jacobi--Bellman (HJB) equation, which is reduced to a system of ordinary differential equations (ODE) in the full information case, and given by a system of partial differential equations (PDE) in the partial information case. We also provide verification results for both cases. Third, to solve the convergence trade problem under partial information we compute the filtering equation by applying the innovations approach. Having the filter dynamics enables us to  study the equivalent reduced problem where unobservable states of the Markov chain are replaced by their optional projection over the available filtration. Comparing optimal strategies under full and partial information, we obtain that the certainty equivalence principle holds, i.e., the optimal portfolio strategy in the latter case can be obtained by replacing the unobservable state variable with its filtered estimate.  Finally, we analyze  an example with a two-state Markov chain numerically and demonstrate certain features of our proposed model. In particular, we illustrate the dominance of unrestricted strategies over beta-neutral strategies. Moreover, we show that a trader who uses averaged data (in terms of parameters) is not performing better than the trader who uses a Markov modulated model in a full information setting. For the partial information case, our example suggests that there is a non-negative information premium, indicating that the fully informed trader has an advantage over the partially informed one.

The remainder of the paper is organized as follows. Section \ref{sec:setting} introduces the model setting and the notation. In Section \ref{sec:regime_full} we study the portfolio optimization problem in a full information setting with regime switching. In Section \ref{sec:partial_info} we solve the utility maximization problem under partial information. In Section \ref{sec:2stateMC}, we provide a numerical analysis of an example with a two-state Markov chain. We conclude with Section \ref{sec:concl}. In order to improve the flow of the paper we provide proofs of all results in the Appendix.

\section{Model Setting and Notation}\label{sec:setting}
We study a modification of the continuous-time error-correction model of \citet{liu2013optimal} in a regime-switching setup under both full and partial information. Precisely we fix a probability space $(\Omega, \G, \P)$ and a finite time horizon $T$ which coincides with the terminal time of an investment. We also introduce a complete and right-continuous filtration $\bG=\{\G_t, \ t \in [0,T]\}$, representing the full information flow, and assume that all processes defined below are adapted to $\bG$.

Let $Y$ be a continuous-time finite-state Markov chain taking values in $\mathcal{E}=\{e_1, \dots, e_K\}$, for $K\geq 2$ , where, without loss of generality, we assume that $e_i$ is the $i$-th canonical vector in $\R^K$, for every $i\in\{1, \dots, K\}$. We denote by $Q=(q^{ij})_{i,j\in\{1, \dots, K\}}$ the infinitesimal generator of $Y$, with $q^{ij}>0$  for every $i\neq j$ and $q^{ii}=-\sum_{j \neq i}q^{ij}$, and let   $\Pi=(\Pi^1, \dots \Pi^K)$ be its initial distribution.

\begin{rem}\label{Rem1}
The finite-state nature of the Markov chain implies that for every $t\in[0,T]$, and every function $f:\mathcal{E}\to \R$ we have $f(Y_t)=\sum_{i=1}^K f^i \I_{\{Y_t=e_i\}}$ where  $f^{i}=f(e_i)$, for every $i\in \{1, \dots, K\}$.
\end{rem}

We consider a market model where a trader can invest in a riskless asset with constant rate of return $r\ge 0$ and three risky assets with price processes $S^{(m)}, S^{(1)}$ and $S^{(2)}$, where the first one represents the market index and the other two are co-integrated assets. We assume that the price dynamics of market index is given by
\begin{equation}
\frac{\ud S^{(m)}_t}{S^{(m)}_t}=(r+\mu_m)\,\ud t+\sigma_m\,\ud B^{(m)}_t, \quad S^{(m)}>0,
\end{equation}
where $\mu_m\in \R$ is the market risk premium, $\sigma_m>0$ is the market volatility and $B^{(m)}$ is a standard Brownian motion. Moreover co-integrated asset prices are described by the following SDEs,
\begin{align}\label{convergence}
\frac{\ud S^{(1)}_t}{S^{(1)}_t}\!=(r+\beta_1 \mu_m)\,\ud t\!+\!\beta_1\sigma_m\,\ud B^{(m)}_t\!\!\!+\!\sigma\,\ud B^{(0)}_t\!\!\!+b_1\,\ud B^{(1)}_t\!\!\!-\!\lambda_1(Y_t)\!\left(X_t\!-\alpha_1(Y_t)\right)\ud t,
\end{align}
\begin{align}
\label{convergence2}
\frac{\ud S^{(2)}_t}{S^{(2)}_t}\!=(r+\beta_2 \mu_m)\,\ud t\!+\!\beta_2\sigma_m\,\ud B^{(m)}_t\!\!\!+\!\sigma\,\ud B^{(0)}_t\!\!\!+\!b_2\,\ud B^{(2)}_t\!\!\!+\!\lambda_2(Y_t)\!\left(X_t\!-\alpha_2(Y_t)\right)\ud t,
\end{align}
with $S^{(1)}_0>0$ and $S^{(2)}_0>0$. Coefficients $\beta_1 \in \mathbb{R}, \beta_2 \in \mathbb{R}, b_1>0, b_2>0$ and $\sigma>0$ are constant parameters and $(B^{(0)}, B^{(1)}, B^{(2)})$ is a three-dimensional standard Brownian motion independent of $B^{(m)}$.

We define the spread process $X$ by $X_t=\log{S^{(1)}_t}-\log{S^{(2)}_t}$, for every $t \in [0,T]$. Here $X$  represents the mean-reverting component of pricing errors. We assume that $\lambda_1(Y_t)+\lambda_2(Y_t)>0$ $\P$-a.s. for every $t \in [0,T]$,  so that $X$ becomes a stationary process with the dynamics
\begin{align}
\nonumber \ud X_t=&\Big(\Gamma_1-\lambda_1(Y_t)(X_t-\alpha_1(Y_t))-\lambda_2(Y_t)(X_t-\alpha_2(Y_t))\Big{)}\,\ud t \\
&+\left(\beta_1-\beta_2\right)\sigma_m\,\ud B^{(m)}_t+b_1 \ud B^{(1)}_t-b_2\,\ud B^{(2)}_t,\quad X_0\in \R, \label{eq:spread}
\end{align}
where
\begin{align}\label{eq:Gamma1}
\Gamma_1=\left(\beta_1-\beta_2\right)\mu_m-\frac{1}{2}\left( (\beta_1^2-\beta_2^2)\sigma_m^2+b_1^2-b_2^2 \right).\end{align}

In \eqref{convergence} and \eqref{convergence2}, the infinitesimal expected returns are
$$(r+\beta_1 \mu_m)\,\ud t-\lambda_1(Y_t)\!\left(X_t\!-\alpha_1(Y_t)\right)\ud t$$ and
$$(r+\beta_2 \mu_m)\,\ud t+\lambda_2(Y_t)\!\left(X_t\!-\alpha_2(Y_t)\right)\ud t,$$ respectively. It is evident from the form of infinitesimal returns that if $\lambda_j(\cdot)$ is chosen to be identical to zero or $X_t$ is equal to $\alpha_j(\cdot)$, for every $j\in\{1,2\}$, asset price dynamics satisfy the CAPM relation, meaning that CAPM establishes the expected returns correctly and there is no mispricing in either asset. On the other hand, for example, if $-\lambda_1(Y)(X-\alpha_1(Y))>0$, the first asset has a higher expected return than it is justified by its exposure to market risk, and hence has a positive alpha, meaning it is undervalued. By choosing $\lambda_1$ and $\lambda_2$ dependent on the hidden Markov chain $Y$, we therefore, postulate that pricing errors (or alphas) depend on some common factor in the economy or the market that can not be directly observed by the trader. Also, as it is suggested by \citet{liu2013optimal}, one can interpret those pricing errors as reflecting momentarily positive or negative liquidity shocks, which may vanish in liquid markets. For example, because of liquidity effects, stocks listed in S\&P 500 have overstated betas \citet{vijh1994s}, which in turn affects  pricing errors. By assuming a regime-switching framework for pricing errors, we are also able to model such liquidity effects. This type of liquidity effects and related mispricing is actually very important for trades involving dual-listed companies (or so-called ``Siamese twin'' companies), which are incorporated in different countries and listed in different exchanges simultaneously while operating as a single entity. For such companies, since shares listed in different exchanges have same control rights and dividends are based on the same cash flow, most of the mispricing between two stocks is due to liquidity effects arising from stock exchanges that individual shares are traded in and hence prone to different regimes; see \citet{de2009risk} for more on stock price differentials of dual-listed companies.
We should also remark that when we take $\beta_1=\beta_2$ and $b_1=b_2$, the model becomes a regime-switching version of the original one suggested by \citet{liu2013optimal} that involves two assets with the same payoff trading at different prices.

\section{Optimal Convergence Trade under Regime Switching}\label{sec:regime_full}
Let $W^{h}$ be the value of a portfolio $h=(\hm, \ho, \hd)$, where quantities $\hm_t, \ho_t$ and $\hd_t$ denote fractions of the wealth invested  at any time $t \in [0,T]$ in the market index $S^{(m)}$ and in the co-integrated assets with prices $S^{(1)}$ and $S^{(2)}$, respectively. Consequently the percentage of wealth invested in the riskless asset is $1-\hm-\ho-\hd$. We introduce now the suitable set of strategies.

\begin{definition}
A $\bG$-admissible portfolio strategy is a  self-financing, $\bG$-predictable strategy $h=(\hm, \ho, \hd)$ such that
\begin{align}
\esp{\int_0^T \left({\hm_t}^2+{\ho_t}^2+{\hd_t}^2\right)\ud t}<\infty. \label{eq:integrability}
\end{align}
The set of $\bG$-admissible strategies is denoted by $\mathcal{A}$.
\end{definition}
For every  $h=(\hm, \ho,\hd) \in \mathcal{A}$, the dynamics of the convergence trading portfolio is given by
\begin{eqnarray}
\nonumber\frac{\ud W^h_t}{W^h_t}&=&\left(r+\left(\hm_t+\ho_t\beta_1+\hd_t\beta_2 \right)\mu_m+\hd_t\lambda_2(Y_t)(X_t-\alpha_2(Y_t))\right.\\
\nonumber&-&\left.\ho_t\lambda_1(Y_t)(X_t-\alpha_1(Y_t))  \right)\ud t
+\sigma_m\left(\hm_t+\ho_t\beta_1+\hd_t\beta_2\right)\ud B^{(m)}_t\\
\nonumber&+& \sigma\left(\ho_t+\hd_t \right)\ud B^{(0)}_t
+b_1\ho_t \ud B^{(1)}_t+b_2\hd_t \ud B^{(2)}_t,\quad W_0^h>0.
\label{wealthdyn}\end{eqnarray}
We consider a trader with logarithmic preferences and who aims to maximize the expected utility from terminal wealth at time $T$ in a market with regime switching. In this section, we assume that  the trader may directly observe the state of the Markov chain Y that influences the dynamics of price processes and the spread. Formally, we address the following problem
\begin{equation}\mbox{ Maximize } \ \mathbf{E}^{t,w,x,i}[\log W_T^h] \ \mbox{ over all } \  h \in \mathcal{A},\label{eq:objective}\end{equation}
where $\mathbf{E}^{t,w,x,i}$ denotes the conditional expectation given $W_t=w$, $X_t=x$ and $Y_t=e_i$.
We define the value function corresponding to problem \eqref{eq:objective} as
\begin{equation}\label{vf}V(t,w,x,i):=\underset{h\in\mathcal{A}}\sup\,\,\mathbf{E}^{t,w,x,i}\left[\log{W_T^h}\right].\end{equation}
Notice that for a given $h\in\mathcal{A}$, $W^h$ is a controlled process. For the sake of notational simplicity from now on we suppress $h$ dependency and write $W$ instead of $W^h$.

In Theorem \ref{optimalfull}, we address the optimization problem by applying dynamic programming. Our goal is to identify the optimal strategy as well as to characterize the value function as the unique solution of the corresponding HJB equation. This approach permits to examine the value function of the control problem in detail. One could alternatively derive the stochastic representation of the value function and characterize it up to the solution of a system of partial differential equations via Feynman--Kac type arguments for Markov-modulated diffusion processes; see, e.g., \citet{Baran2013} and \citet{escobar2015portfolio}.

In the sequel, we will use the following notation for the partial derivatives: for every function $g:[0,T]\times\mathbf{R}_{+}\times\mathbf{R}\to\mathbf{R},$ we write, for instance, $\frac{\partial g}{\partial t}=g_t$. Also, by Remark~\ref{Rem1} we have that $\lambda_j(e_i)=\lambda_j^i$ and $\alpha_j(e_i)=\alpha_j^i$, for $j\in\{1,2\}$ and $i\in\{1,\dots,K\}$.
\begin{theorem}\label{optimalfull} Consider a trader endowed with a logarithmic utility function. Then the optimal portfolio strategy $h^*=({\ho}^{\ast},{\hd}^{\ast},{\hm}^{\ast})\in \mathcal{A}$ is
\begin{eqnarray}\label{opt1}{\ho}^{\ast}(t,x,i)&=& -\frac{\lambda_1^{i}(x-\alpha_1^i)+\lambda_2^{i}(x-\alpha_2^i)\varrho_2}{b_1^2+b_2^2\varrho_2},\\
\label{opt2}{\hd}^{\ast}(t,x,i)&=&\frac{\lambda_2^{i}(x-\alpha_2^i)+\lambda_1^{i}(x-\alpha_1^i)\varrho_1}{b_2^2+b_1^2\varrho_1},\\
\label{opt3}{\hm}^{\ast}(t,x,i)&=& \frac{\mu_m}{\sigma_m^2}-\beta_1{\ho}^{\ast}(t,x,i)-\beta_2{\hd}^{\ast}(t,x,i),
\end{eqnarray}
with $\varrho_1=\frac{\sigma^2}{ \sigma^2+b_1^2}$ and $\varrho_2=\frac{\sigma^2}{ \sigma^2+b_2^2}$. The value function is of the form
\begin{equation}V(t, w, x, i )=\log(w)+m(t,i)x^2+n(t,i)x+u(t,i), \label{eq:value_function}\end{equation}
where functions $m(t,i)$, $n(t,i)$ and $u(t,i)$ for $i\in\{1, \dots, K\}$ solve the following system of ordinary differential equations
\begin{align}
&m_t(t,i)-2(\lambda_1^i+\lambda_2^i)m(t,i)+\sum_{j=1}^K m(t,j) q^{ij}+\Theta_1^i =0,\label{eq:system1}\\
&n_t(t,i)-(\lambda_1^i+\lambda_2^i)n(t,i)+\sum_{j=1}^K n(t,j) q^{ij}+2(\Gamma_1+\lambda_1^i\alpha_1^i+\lambda_2^i\alpha_2^i)m(t,i)-\Theta_2^i=0,\label{eq:system2}\\
&u_t(t,i)+ \sum_{j=1}^K u(t,j) q^{ij}+\Gamma_2m(t,i)+(\Gamma_1+\lambda_1^i\alpha_1^i+\lambda_2^i\alpha_2^i) n(t,i)+\Theta_3^i=0,    \quad{}\label{eq:system3}
\end{align}
with terminal conditions $m(T,i)=0$, $n(T,i)=0$ and $u(T,i)=0$ for all $i\in \{1,\dots,K \}$, and where $\Gamma_1$ is given in \eqref{eq:Gamma1} and
\begin{align*}
\Theta_1^i&=\frac{b_1^2(\lambda_2^i)^2 + b_2^2(\lambda_1^i)^2 +\sigma^2\left(\lambda_1^i+\lambda_2^i\right)^2}{2\left(b_1^2b_2^2 +\sigma^2(b_1^2+b_2^2)\right)},\\
\Theta_2^i&= \frac{\alpha_1^i\lambda_1^i(\lambda_1^i(b_2^2+\sigma^2)+\lambda_2^i\sigma^2)+\alpha_2^i\lambda_2^i(\lambda_2^i(b_1^2+\sigma^2)+\lambda_1^i\sigma^2)}{b_1^2b_2^2 +\sigma^2(b_1^2+b_2^2)},\\
\Theta_3^i&=\frac{(\alpha_1^i\lambda_1^ib_2)^2+(\alpha_2^i\lambda_2^ib_1)^2+\sigma^2(\alpha_1^i\lambda_1^i+\alpha_2^i\lambda_2^i)^2}{2\left(b_1^2b_2^2 +\sigma^2(b_1^2+b_2^2)\right)}+r+\frac{\mu_m^2}{2\sigma_m^2},\\
\Gamma_2&=\sigma_m^2(\beta_1-\beta_2)^2+b_1^2+b_2^2.
\end{align*}

\end{theorem}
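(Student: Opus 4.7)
The plan is to apply stochastic dynamic programming: write the HJB equation associated with problem \eqref{vf}, guess an ansatz for the value function that separates the wealth variable from $(x,i)$, perform the pointwise maximization over $h$ to obtain explicit candidate optimizers, substitute back into the HJB equation to derive a system of ODEs for the $(x,i)$-dependent coefficients, and then run a verification argument. Concretely, using the full-information generator associated with $(W,X,Y)$ under a fixed Markov control $h=(\hm,\ho,\hd)$, the HJB equation reads
\begin{align*}
0 = V_t(t,w,x,i) + \sum_{j=1}^K q^{ij} V(t,w,x,j) + \sup_{h\in\R^3}\L^h V(t,w,x,i),
\end{align*}
with terminal condition $V(T,w,x,i)=\log w$, where $\L^h$ is the usual second-order differential operator produced by the joint SDE for $(W,X)$ derived from \eqref{wealthdyn} and \eqref{eq:spread} (including all the Brownian cross-variation terms between wealth and spread arising from the shared factors $B^{(m)}, B^{(1)}, B^{(2)}$).

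My ansatz is exactly \eqref{eq:value_function}, namely $V=\log w + m(t,i)x^2+n(t,i)x+u(t,i)$. This choice is natural for the log utility because then $V_w=1/w$ and $wV_{ww}=-1/w$, which makes the $h$-dependence inside $\L^h V$ reduce to a quadratic form in $(\hm,\ho,\hd)$ whose Hessian, $-M$, is the negative of the $3{\times}3$ instantaneous covariance matrix of the risky returns multiplied by $w^2 V_{ww}<0$. Hence the pointwise maximization is a strictly concave quadratic program. Writing the first-order conditions in vector form $M h = v(t,x,i)$, where $v$ collects the drift terms plus the cross terms $w V_{wx}\cdot(\text{covariance with }X)=(2m(t,i)x+n(t,i))\cdot(\text{covariance row})$, I will solve the $3\times3$ linear system explicitly. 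A key simplification is that $\hm$ only enters through the market-index block, decoupling the CAPM part from the convergence part and immediately yielding \eqref{opt3}; the remaining $2\times2$ block in $(\ho,\hd)$ is solved by Cramer's rule, which produces the shrinkage factors $\varrho_1,\varrho_2$ in \eqref{opt1}--\eqref{opt2}. Since $V_{wx}/V_w$ depends affinely on $x$ with coefficients $2m(t,i)$ and $n(t,i)$, it is important that the optimizer comes out independent of $m$ and $n$; this happens because the cross-variation between $W$ and $X$ is itself linear in $h$ with the same weighting matrix $M$, and the $(m,n)$-contributions cancel out. I'll record this cancellation carefully as it is the algebraic heart of the computation.

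Next I will substitute $h^\ast$ back into the HJB equation. The left-hand side becomes a polynomial in $x$ of degree at most two whose coefficients depend on $(t,i)$ and on $m(t,\cdot),n(t,\cdot),u(t,\cdot)$ through the regime-switching term $\sum_j q^{ij}\cdot$, the time derivatives $m_t,n_t,u_t$, and the $\Theta^i_1,\Theta^i_2,\Theta^i_3$ coefficients arising from the maximized Hamiltonian. Matching coefficients of $x^2$, $x^1$, and $x^0$ gives precisely the three ODE systems \eqref{eq:system1}--\eqref{eq:system3}, with terminal conditions $m(T,i)=n(T,i)=u(T,i)=0$ coming from $V(T,\cdot)=\log w$. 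Equation \eqref{eq:system1} is a decoupled linear system in $m(t,\cdot)\in\R^K$, \eqref{eq:system2} is linear in $n(t,\cdot)$ once $m$ is known, and \eqref{eq:system3} is then a simple integration; hence existence, uniqueness and $C^1$ regularity in $t$ of the solution follow immediately from standard linear ODE theory, so the candidate $V$ is $C^{1,2}$ in $(t,w,x)$ and bounded together with its relevant derivatives on compacts.

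The main obstacle is the verification step. I need to show (i) that $h^\ast$ is $\bG$-admissible in the sense of \eqref{eq:integrability}, and (ii) that for every $h\in\A$ one has $V(t,w,x,i)\ge \esp[\mathbf E^{t,w,x,i}]{\log W_T^h}$ with equality for $h^\ast$. For admissibility, since $h^\ast$ is affine in $x$ with bounded-in-$Y$ coefficients and $X$ is a regime-switching mean-reverting diffusion with all moments finite on $[0,T]$, the square-integrability \eqref{eq:integrability} follows from a standard moment estimate for $X$. For optimality, I will apply It\^o's formula to $V(t,W_t,X_t,Y_t)$ under an arbitrary $h\in\A$, use that $V$ solves the HJB equation so the drift of $V(t,W_t,X_t,Y_t)$ is nonpositive (and zero at $h^\ast$), and localize by a sequence of stopping times $\tau_n\uparrow T$ to kill the martingale part; passing to the limit requires uniform integrability of $\{V(\tau_n,W_{\tau_n},X_{\tau_n},Y_{\tau_n})\}_n$, which I will obtain from the explicit polynomial-in-$x$ form of $V$, boundedness of $m,n,u$ on $[0,T]\times\{1,\dots,K\}$, and standard $L^p$ estimates on $W$ and $X$ using \eqref{eq:integrability} and Gronwall/BDG arguments. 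This closes the identification $V=$value function and confirms that $h^\ast$ given by \eqref{opt1}--\eqref{opt3} is optimal.
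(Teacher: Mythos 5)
Your proposal follows essentially the same route as the paper: write the full-information HJB equation for the generator of $(W,X,Y)$ (including the regime-switching term $\sum_j q^{ij}V(t,w,x,j)$), use the additive ansatz \eqref{eq:value_function}, obtain the candidate strategy from the first-order conditions, match coefficients of $x^2,x,1$ to get the linear ODE systems \eqref{eq:system1}--\eqref{eq:system3}, and close with an It\^o-based verification; your verification is if anything more detailed (localization, uniform integrability via moment bounds) than the paper's, which argues directly that the stochastic integrals are true martingales from the polynomial form of $v$ and the admissibility condition \eqref{eq:integrability}.

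One point needs correcting, though it does not derail the argument. You claim the first-order conditions contain cross terms $wV_{wx}\cdot(\text{covariance of } W \text{ and } X)=(2m(t,i)x+n(t,i))\cdot(\text{covariance row})$ and that the $(m,n)$-contributions ``cancel out'' against the weighting matrix $M$; this is not what happens. Under the ansatz $V=\log w+m(t,i)x^2+n(t,i)x+u(t,i)$ one has $V_w=1/w$, hence $V_{wx}\equiv 0$ (you appear to be conflating $wV_{wx}$ with $V_x=2mx+n$, as would be relevant for a multiplicative power-utility ansatz). Consequently the $W$--$X$ cross-variation term, although linear in $h$, is multiplied by zero and never enters the Hamiltonian's $h$-dependence: there is no cancellation to record, the term is simply absent, and this is exactly why the maximizer is myopic and independent of $m,n$, in agreement with the paper's first-order conditions and with \eqref{opt1}--\eqref{opt3}. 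If instead those $(2mx+n)$-terms genuinely appeared in the linear system $Mh=v$, the optimizer would depend on $m$ and $n$ and would not reduce to the stated formulas, so it is important to get this step right rather than rely on a purported cancellation. With that correction, the rest of your plan (decoupled linear ODEs for $m$, then $n$, then $u$, terminal conditions from $V(T,\cdot)=\log w$, and the verification argument) matches the paper's proof.
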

We provide the proof of the Theorem~\ref{optimalfull} in the Appendix.

\begin{rem}[Discussion on the optimal trading strategy.]\label{remarkfull}
The optimal trading strategy is Markov modulated and has a typical structure of a mean-variance portfolio weights. More specifically, numerator of each portfolio weight $h^{(j)^{\ast}}$, $j\in\{1,2\}$, is associated with regime-switching parameters, $\lambda_1(Y),\lambda_2(Y)$ and $\alpha_1(Y),\alpha_2(Y)$, related to the co-integration between $S^{(1)}$ and $S^{(2)}$, or equivalently, to pricing errors. The denominator, on the other hand, is akin to the idiosyncratic risk components, $b_1$, $b_2$ and $\sigma$. We should also emphasize that  ${\ho}^{\ast}$ and ${\hd}^{\ast}$ do not depend on market parameters, $\beta_1$, $\beta_2$, $\mu_m$ and $\sigma_m$, since the market exposure of each asset is covered by investing in the market index. $\varrho_1$ and $\varrho_2$ can be seen as the \emph{relative idiosyncratic variation} of $S^{(1)}$ (resp. $S^{(2)}$ ) with respect to $S^{(2)}$ (resp. $S^{(1)}$). The role of $\varrho_1$ is actually to scale the contribution of pricing error and the independent idiosyncratic variance of $S^{(2)}$ in ${\ho}^{\ast}$. Naturally, $\varrho_2$ has the analogous interpretation. Note that when $\sigma=0$, meaning that there is no correlation between $S^{(1)}$ and $S^{(2)}$, those contributions vanish and the optimal portfolio weights in each stock only depend on their own pricing errors and idiosyncratic risks. The structure of the market portfolio weight is similar to that in \citet{liu2013optimal} and given by the sum of Sharpe's ratio of the market index and a linear combination of ${\ho}^{\ast}$ and ${\hd}^{\ast}$, weighted by their corresponding betas. Finally as ${\ho}^{\ast}$ and ${\hd}^{\ast}$ are prone to different regimes so is ${\hm}^{\ast}$, even if the market index dynamics is independent of the Markov chain.
\end{rem}

\subsection{Optimal Beta-Neutral Investment}
To achieve market neutrality, traders may chose investment strategies so that the resulting portfolio has zero (CAPM) beta. The goal of this section is to characterize this type of trading strategies which are called {\em beta-neutral}. We should also remind the reader that this type of strategies can also be used for ``betting against betas'' type strategies in which high beta asset (short leg) is deleveraged so that its betas has been decreased to $1$ and the low beta asset (long leg) is leveraged so that its beta has become 1. We start with a formal definition.

\begin{definition} A $\bG$-admissible beta-neutral portfolio strategy is $\bG$-predictable self-financing strategy $h^{\beta}=(h^{(\beta,1)},h^{(\beta,2)},h^{(\beta,m)})$ such that
\begin{equation*}\label{beta-admissfull}
\beta_1h^{(\beta,1)}_t+\beta_2h^{(\beta,2)}_t=0, \quad t \in [0,T],
\end{equation*}
and satisfying $\esp{\int_0^T\left( {h^{(\beta,m)}_t}^2+{h^{(\beta,1)}_t}^2\right)\ud t}<\infty$.
We denote the set of $\bG$-admissible beta-neutral strategies by $\mathcal{A}^{\beta}$.
\end{definition}
In the next theorem we compute the optimal beta-neutral investment strategies and the corresponding value function. The proof of this result replicates that of Theorem \ref{optimalfull} and it is therefore omitted.
\begin{theorem}\label{optimalfullbeta}
Consider a trader with a logarithmic utility function. Then the optimal beta-neutral investment strategy ${h^{\beta}}^{\ast}=({h^{(\beta,1)}}^{\ast},{h^{(\beta,2)}}^{\ast},{h^{(\beta,m)}}^{\ast}) \in \mathcal{A}^{\beta}$ is given by
\begin{align}
&{h^{(\beta,1)}}^{\ast}(t,x,i)= -\frac{\lambda_1^i(x-\alpha_1^i)+\frac{\beta_1}{\beta_2}\lambda_2^i(x-\alpha_2^i)}{b_1^2+\frac{\beta_1^2}{\beta_2^2}b_2^2+\sigma^2\left(1-\frac{\beta_1}{\beta_2}\right)^2}, \\
&{h^{(\beta,2)}}^{\ast}(t,x,i)=-\frac{\beta_1}{\beta_2}{h^{(\beta,1)}}^{\ast}(t,x,i),\\
&{h^{(\beta,m)}}^{\ast}(t,x,i)=\frac{\mu_m}{\sigma_m^2}.
\end{align}
The value function is of the form \begin{equation}V(t, w, x, i )=\log(w)+m(t,i)x^2+n(t,i)x+u(t,i),\end{equation}
where the functions $m(t,i)$, $n(t,i)$ and $u(t,i)$ for $i\in\{1, \dots, K\}$ solve the following system of ordinary differential equations
\begin{align}
&m_t(t,i)-2(\lambda_1^i+\lambda_2^i)m(t,i)+\sum_{j=1}^K m(t,j) q^{ij}+\Phi_1^i =0,\label{eq:system1b}\\
&n_t(t,i)-(\lambda_1^i+\lambda_2^i)n(t,i)+\sum_{j=1}^K n(t,j) q^{ij}+2(\Gamma_1+\lambda_1^i\alpha_1^i+\lambda_2^i\alpha_2^i)m(t,i)-\Phi_2^i=0,\label{eq:system2b}\\
&u_t(t,i)+ \sum_{j=1}^K u(t,j) q^{ij}+\Gamma_2m(t,i)+(\Gamma_1+\lambda_1^i\alpha_1^i+\lambda_2^i\alpha_2^i)n(t,i)+\Phi_3^i=0,    \quad{}\label{eq:system3b}
\end{align}
with terminal conditions $m(T,i)=0$, $n(T,i)=0$ and $u(T,i)=0$ for all $i\in \{1,\dots,K \}$, and where $\Gamma_1$ and $\Gamma_2$ are as given in Theorem \ref{optimalfull} and
\begin{align*}
\Phi_1^i&=\frac{\left(\beta_2\lambda_1^i+\beta_1\lambda_2^i\right)^2}{2\left(b_1^2\beta_2^2+b_2^2\beta_1^2+\sigma^2\left(\beta_1-\beta_2\right)^2\right)},\\
\Phi_2^i&=\frac{\left(\alpha_1^i\beta_2\lambda_1^i+\alpha_2^i\beta_1\lambda_2^i\right)\left(\beta_1\lambda_2^i+\beta_2\lambda_1^i\right)}{b_1^2\beta_2^2+b_2^2\beta_1^2+\sigma^2\left(\beta_1-\beta_2\right)^2},\\
\Phi_3^i&=\frac{\left(\alpha_1^i\beta_2\lambda_1^i+\alpha_2^i\beta_1\lambda_2^i\right)^2}{2\left(b_1^2\beta_2^2+b_2^2\beta_1^2+\sigma^2\left(\beta_1-\beta_2\right)^2\right)} +r+\frac{\mu_m^2}{2\sigma_m^2}.
\end{align*}

\end{theorem}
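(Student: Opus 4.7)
The plan is to adapt the dynamic programming argument from Theorem~\ref{optimalfull} to the reduced control space imposed by beta-neutrality. First I would use the linear relation $h^{(\beta,2)}_t = -(\beta_1/\beta_2)\,h^{(\beta,1)}_t$ to eliminate one control and rewrite the wealth dynamics under the constraint. The market-index exposure $\hm_t+\ho_t\beta_1+\hd_t\beta_2$ collapses to $h^{(\beta,m)}_t$, so the $B^{(m)}$ terms depend only on $h^{(\beta,m)}_t$; the remaining diffusions and the pricing-error drift are linear in $h^{(\beta,1)}_t$. In particular, the effective pricing-error drift equals $-h^{(\beta,1)}_t A^i(X_t)$ with $A^i(x)=\lambda_1^i(x-\alpha_1^i)+(\beta_1/\beta_2)\lambda_2^i(x-\alpha_2^i)$, and the quadratic-variation contribution of the constrained strategy becomes $(h^{(\beta,1)}_t)^2 D$ with $D=b_1^2+(\beta_1/\beta_2)^2 b_2^2+\sigma^2(1-\beta_1/\beta_2)^2$.

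Next, I would posit the same ansatz $V(t,w,x,i)=\log w+m(t,i)x^2+n(t,i)x+u(t,i)$ used in Theorem~\ref{optimalfull}. Because $V_{wx}=0$ and $V_w=1/w$, the pointwise supremum inside the HJB decouples into two independent scalar quadratics: the one in $h^{(\beta,m)}$ gives the Merton weight $\mu_m/\sigma_m^2$ and contributes the constant $r+\mu_m^2/(2\sigma_m^2)$; the one in $h^{(\beta,1)}$ gives $-A^i(x)/D$ and contributes $A^i(x)^2/(2D)$. Expanding this latter polynomial in $x$ and multiplying numerator and denominator by $\beta_2^2$ identifies its coefficients as precisely $\Phi_1^i$, $-\Phi_2^i$, and the nonmarket part of $\Phi_3^i$.

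I would then substitute the optimizers back into the HJB and match powers of $x$. The $V_x$-drift of the spread contributes the $-(\lambda_1^i+\lambda_2^i)$ and $2(\Gamma_1+\lambda_1^i\alpha_1^i+\lambda_2^i\alpha_2^i)$ terms, the $V_{xx}$-diffusion of $X$ contributes $m(t,i)\Gamma_2$, and the Markov-chain generator acting on $V$ contributes $\sum_{j} q^{ij}V(\cdot,j)$. Collecting coefficients of $x^2$, $x$, and $1$ produces exactly the triangular linear ODE system \eqref{eq:system1b}--\eqref{eq:system3b}. Existence and uniqueness backward from $T$ are standard: solve for $m$ first (a linear system in $\R^K$), then $n$ given $m$, and finally $u$ given both.

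The main obstacle, as in the proof of Theorem~\ref{optimalfull}, is verification: one must check that the candidate feedback lies in $\mathcal{A}^\beta$, which reduces to showing $\mathbb{E}\int_0^T X_t^2\,\ud t<\infty$ since $h^{(\beta,1)*}$ is affine in $x$ (and $h^{(\beta,m)*}$ is constant), and then apply It\^o's formula to the smooth candidate $V(t,W_t,X_t,Y_t)$ along a localizing sequence, handling the Brownian integrals and the compensated jump martingale of $Y$ via uniform integrability arguments typical of the log-utility case. This step essentially replicates verbatim the verification carried out for the unrestricted case, which is why the authors omit it.
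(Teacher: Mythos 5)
Your proposal is correct and follows exactly the route the paper intends: the authors omit the proof precisely because it replicates that of Theorem~\ref{optimalfull}, i.e.\ eliminate $h^{(\beta,2)}$ via the constraint, note that the market exposure collapses to $h^{(\beta,m)}$ so the HJB supremum decouples, optimize the two scalar quadratics to get $\mu_m/\sigma_m^2$ and $-A^i(x)/D$, and match powers of $x$ with the quadratic ansatz to obtain \eqref{eq:system1b}--\eqref{eq:system3b}, with the same It\^o/martingale verification as in the unrestricted case. Your identification of the coefficients $\Phi_1^i$, $\Phi_2^i$, $\Phi_3^i$ after clearing $\beta_2^2$ is accurate, and the admissibility remark ($h^{(\beta,1)*}$ affine in $x$, so it suffices that $\mathbb{E}\int_0^T X_t^2\,\ud t<\infty$) is the right check.
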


\begin{rem} Notice that the ratio $\beta_1/\beta_2$ plays the role of $\varrho_2$ in Theorem~\ref{optimalfull}. In addition, setting $\beta_1=\beta_2$ in the current context corresponds to the so-called {\em  delta-neutral strategies}. This is a class of investment strategies that satisfy $h^{(\delta,1)}=-h^{(\delta,2)}$. In other terms, this amounts to invest the same capital in each of the co-integrated stocks. In this setting the optimal delta-neutral strategy is given by
\begin{gather*}
{h^{(\delta,1)}}^{\ast}(t,x,i)=-{h^{(\delta,2)}}^{\ast}(t,x,i)=-\frac{\lambda_1^i(x-\alpha_1^i)+\lambda_2^i(x-\alpha_2^i)}{b_1^2+b_2^2},\\ {h^{(\delta,m)}}^{\ast}(t,x,i)=\frac{\mu_m}{\sigma_m^2}.
\end{gather*}
\end{rem}

\section{Optimal Convergence Trade under Partial Information }\label{sec:partial_info}
The goal of this section is to study the utility maximization problem related to convergence trade from the point of view of a partially informed investor. Therefore we now assume that the investor cannot directly observe the state of the Markov chain $Y$, and that her information comes from the observation of price processes $S^{(m)}, S^{(1)}$ and $ S^{(2)}$. Mathematically, the available information flow is given by filtration $\bF:=\{\F_t,  \ t \in [0,T]\}$, where $\F_t=\sigma(S^{(1)}_u, S^{(2)}_u, S^{(m)}_u, \ 0 \le u \le t)$. Since the investor chooses how to allocate her wealth according to the available information, we will now consider the following set admissible strategy.

\begin{definition}\label{def:strategies_partial_info}
  An $\bF$-admissible portfolio strategy is a  self-financing and $\bF$-predictable strategy  $h=(\hm, \ho, \hd)$ that satisfies integrability condition \eqref{eq:integrability}, and  $\mathcal{A}^{\bF}$ is the set of all $\bF$-admissible strategies.
\end{definition}

In order to solve the optimization problem under partial information we first infer information about the state of the Markov chain $Y$ from the observation process $(S^{(1)},S^{(2)},S^{(m)})$, using filtering techniques. The idea is to determine the conditional distribution of the unobservable state process $Y$, given the observed history. To this, for every functions  $f:E \to \R$ we define the filter $\pi(f)$ as the optional projection of the process $f(Y)$ on the available filtration, i.e.
\begin{align*}
\pi_t(f)=\mathbf{E}\left[f(Y_t)|\F_t\right], \quad t \in [0,T].
\end{align*}
Due to the nature of process $Y$, we get that
\begin{align*}
\pi_t(f)=\sum_{i=1}^K f(e_i)\P\left(Y_t=e_i |\F_t\right) , \quad t \in [0,T].
\end{align*}
Therefore, solving the filtering problem amounts to compute conditional state probabilities,
\begin{align*}
\pi^{i}_t:=\P\left(Y_t=e_i |\F_t\right), \quad t \in [0,T],
\end{align*}
for every $i\in\{1,\dots,K\}$. In the sequel we will use the notation $\bpi$ to indicate the $K$-dimensional process $(\pi^{1}, \dots, \pi^{K} )^\top$ and $\pi_t(f)=\mathbf{f}^\top \bpi_t=\sum_{i=1}^K f^i\pi^i_t$ where ${\bf f}=(f^1, \dots, f^K)^\top$ and $f^i=f(e_i)$ for every $i\in\{1, \dots, K\}$.
We will use the {\em innovations approach} to characterize processes $\pi^i$ for $i \in \{1, \dots, K\}$. Consider the following processes
\begin{align*}
R^{(1)}_t=-\int_0^t\lambda_1(Y_s) (X_s-\alpha_1(Y_s)) \ud s +\sigma B^{(0)}_t + b_1 B^{(1)}_t, \quad t \in [0,T],\\
R^{(2)}_t=\int_0^t\lambda_2(Y_s)( X_s-\alpha_2(Y_s)) \ud s +\sigma B^{(0)}_t + b_2 B^{(2)}_t, \quad t \in [0,T],
\end{align*}
and observe that the triplets $(S^{(1)}, S^{(2)}, S^{(m)})$ and $(R^{(1)}, R^{(2)}, S^{(m)})$ generate the same information flow. Then we define $(\bG, \P)$-Brownian motions $Z^{(1)}$ and $Z^{(2)}$ by
\begin{align*}
Z^{(1)}_t=\frac{\sigma B^{(0)}_t + b_1 B^{(1)}_t}{\sigma_1}, \qquad Z^{(2)}_t=\frac{\sigma B^{(0)}_t + b_2 B^{(2)}_t}{\sigma_2},\quad  t \in [0,T],\end{align*}
where $\sigma_1=\sqrt{\sigma^2+b^2_1}$ and $\sigma_2=\sqrt{\sigma^2+b^2_2}$. Note that $Z^{(1)}$ and $Z^{(2)}$ are correlated Brownian motions with correlation coefficient $\rho=\frac{\sigma^2}{\sigma_1\sigma_2}\in[0,1]$ and that there exists a $(\bG, \P)$-Brownian motion $\widetilde{Z}^2$ independent of $Z^1$ such that
$Z^2=\rho Z^1+\sqrt{1-\rho^2}\widetilde{Z}^2$. We now introduce the innovation process $I=(I^{(1)}, I^{(2)})^\top$ in the following way. Define
\begin{align*}
\mu_1(X_t, Y_t)= -\lambda_1(Y_t) (X_t-\alpha_1(Y_t)), \quad t \in [0,T],\\
\mu_2(X_t, Y_t)=\lambda_2(Y_t) (X_t-\alpha_2(Y_t)), \quad t \in [0,T],
\end{align*}
and denote by $\pi_t(\mu_i)=\mathbf{E}\left[\mu_i(X_t,Y_t)|\F^S_t\right]=\bmu_i(X_t)^\top \bpi_t$ where,  for $i\in\{1,2\}$ and $t \in [0,T]$ vector $\bmu_i(X_t)=(\mu_i(X_t, e_1), \dots,\mu_i(X_t, e_K) )$; then we get that
\begin{align*}
I^{(1)}_t\!\!&=\!Z^{(1)}_t+\int_0^t \frac{\mu_1(X_u, Y_u) - \bmu_1(X_u)^\top \bpi_u }{\sigma_1}  \ud u, \\
I^{(2)}_t\!\!&=\!\widetilde{Z}^{(2)}_t\!+\!\!\int_0^t\!\! \frac{\sigma_1 (\mu_2(X_u, Y_u) - \bmu_2(X_u)^\top \bpi_u)-\rho \sigma_2 (\mu_1(X_u, Y_u) - \bmu_1(X_u)^\top \bpi_u)}{\sigma_1\sigma_2\sqrt{1-\rho^2}} \ud u,
\end{align*}
 for every $t \in [0,T]$. We will use also the matrix/vector form
\begin{align*}
I_t=Z_t+\int_0^t\Sigma^{-1}(A(X_u,Y_u)-\pi_u(A))\ud u, \quad t \in [0,T],
\end{align*}
where $Z=(Z^{(1)}, \widetilde{Z}^{(2)})^\top$, $A(X,Y)=(\mu_1(X,Y), \mu_2(X,Y))^\top$,
\begin{align*}
\Sigma=\left(
\begin{array}{cc}
\sigma_1&0\\
\sigma_2 \rho & \sigma_2 \sqrt{1-\rho^2}
\end{array}\right). \end{align*}

\begin{rem}\label{inn} The innovation process $I$ has two important features. First, $I$ is an  $(\bF, \P)$-Brownian motion; see, for instance, \citet[Proposition 2.30]{bc}. Second, by the independence between the Markov chain $Y$ and the vector $(B^{(m)},B^{(0)}, B^{(1)}, B^{(2)})$ driving the observation we get that the filtration generated by $(S^{(m)}, R^{(1)}, R^{(2)})$ and that generated by  $(S^{(m)}, I^{(1)}, I^{(2)})$ are the same; see \citet[Theorem 1]{allinger1981new}. Then we can apply \citet[Theorem III.4.34-(a)]{JS} and get that every $(\bF, \P)$-local martingale $M$ admits the following representation
\begin{align}\label{eq:mg_representation}
M_t=M_0+\int_0^t\gamma_u\, \ud I_u,\quad t\in[0,T],
\end{align}
for some $\bF$-predictable 2-dimensional process $\gamma$ such that
\begin{equation*}
\int_0^T\|\gamma_u\|^2\,\ud u<\infty, \quad \P-\text{a.s.}
\end{equation*}
\end{rem}
The filtering equation is computed in the next proposition. The proof of this result is given in Appendix.
\begin{proposition}\label{prop:filtering}
For every $i \in \{1, \dots, K\}$, conditional state probabilities of the process $Y$ satisfy the following system of SDEs
\begin{align}\label{eq:conditional_prob}
  \ud \pi_t^i= \sum_{j=1}^K q^{ji}\pi^j_t \ud t +  H^{i}(X_t,\bpi_t) \ud I_t
  \end{align}
with $\pi_0^i=p_0$, where for $i=1, \dots, K$, $H^{i}(X, \bpi):=\{H^{i}(X_t, \bpi_t), t\geq 0\}$ is the 2-dimensional process with components
\begin{align*}
 H^{i,(1)}(X_t, \bpi_t)&= \frac{\pi^i_t (\mu_1(X_t, e_i)-\bmu_1(X_t)^\top \bpi_t)}{\sigma_1} \\
 H^{i,(2)}(X_t, \bpi_t)&=\frac{\pi^i_t\left(\sigma_1(\mu_2(X_t, e_i)-\bmu_2(X_t)^\top\bpi_t)-\sigma_2\rho(\mu_1(X_t, e_i)-\bmu_1(X_t)^\top \bpi_t)\right)}{\sigma_1\sigma_2\sqrt{1-\rho^2}},
\end{align*}
for every $t \in [0,T]$ with $\sigma_1=\sqrt{\sigma^2+b^2_1}$ and $\sigma_2=\sqrt{\sigma^2+b^2_2}$.
\end{proposition}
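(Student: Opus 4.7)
The plan is to apply the innovations approach to filter the hidden Markov chain $Y$, following the standard Fujisaki--Kallianpur--Kunita recipe: derive a $\bG$-semimartingale decomposition of the signal indicators $\I_{\{Y_t=e_i\}}$, project onto $\bF$, and then identify the gain via the martingale representation of Remark~\ref{inn}.

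First I would apply Dynkin's formula to the canonical indicators of the finite-state chain. For $f=\I_{\{\cdot=e_i\}}$ one has $(Qf)(e_j)=q^{ji}$, so
\begin{equation*}
\I_{\{Y_t=e_i\}}=\I_{\{Y_0=e_i\}}+\int_0^t\sum_{j=1}^K q^{ji}\I_{\{Y_s=e_j\}}\,\ud s+M^i_t,
\end{equation*}
where $M^i$ is a purely discontinuous $(\bG,\P)$-martingale. Since $Y$ is independent of the driving Brownian vector $(B^{(m)},B^{(0)},B^{(1)},B^{(2)})$, the martingale $M^i$ is orthogonal (in $\bG$) to the continuous martingales $Z^{(1)}$ and $\widetilde Z^{(2)}$; this orthogonality will eventually kill the quadratic covariation term and produce the Kushner--Stratonovich gain. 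Taking $\F_t$-optional projection, a Fubini argument (justified by boundedness of $Q$ and of the indicators) turns the drift into $\int_0^t\sum_j q^{ji}\pi^j_s\,\ud s$, and the tower property yields that the remainder $N^i_t:=\pi^i_t-\pi^i_0-\int_0^t\sum_j q^{ji}\pi^j_s\,\ud s$ is an $(\bF,\P)$-martingale. By Remark~\ref{inn} there exists an $\bF$-predictable $\R^2$-valued process $H^i$ such that $N^i_t=\int_0^t H^i_u\,\ud I_u$, and the proof reduces to identifying $H^i$.

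To identify $H^i$ I would match the drifts of $\pi^i_t I^{(k)}_t$ computed in $\bF$ and of $\I_{\{Y_t=e_i\}}I^{(k)}_t$ computed in $\bG$, these being equal in expectation because $I^{(k)}_t$ is $\F_t$-measurable. Itô's product rule in $\bF$ combined with $N^i_t=\int_0^t H^i_u\,\ud I_u$ and $\langle I^{(k)},I^{(\ell)}\rangle_t=\delta_{k\ell}t$ produces, up to a vanishing martingale,
\begin{equation*}
\bE[\pi^i_t I^{(k)}_t]=\bE\!\left[\int_0^t I^{(k)}_s\sum_j q^{ji}\pi^j_s\,\ud s+\int_0^t H^{i,(k)}_s\,\ud s\right].
\end{equation*}
On the $\bG$ side, using the decomposition $\ud I_t=\ud Z^*_t+\Sigma^{-1}(A(X_t,Y_t)-\pi_t(A))\,\ud t$ with $Z^*=(Z^{(1)},\widetilde Z^{(2)})^\top$, Itô's product rule applied to $\I_{\{Y_t=e_i\}}I^{(k)}_t$ together with $\langle M^i,Z^{*,(k)}\rangle=0$ leaves only $\int_0^t\I_{\{Y_s=e_i\}}[\Sigma^{-1}(A(X_s,Y_s)-\pi_s(A))]^{(k)}\,\ud s$ in the non-martingale part (beyond the $q^{ji}$ drift, which matches the $\bF$-side). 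Conditioning on $\F_s$ inside the integrand gives $\pi^i_s[\Sigma^{-1}(\mathbf A(X_s,e_i)-\pi_s(A))]^{(k)}$, where $\mathbf A(X_s,e_i)=(\mu_1(X_s,e_i),\mu_2(X_s,e_i))^\top$. Comparing the two expressions and letting $t$ vary yields
\begin{equation*}
H^{i,(k)}_t=\pi^i_t\bigl[\Sigma^{-1}\bigl(\mathbf A(X_t,e_i)-\pi_t(A)\bigr)\bigr]^{(k)},\qquad k=1,2,
\end{equation*}
and an explicit inversion of the lower-triangular $\Sigma$ recovers the two formulas in the statement.

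The main obstacle is the last identification step: one must pass from an expected-integrated identity to a pathwise equality for $H^{i,(k)}$, which is handled by the usual localization/uniqueness argument (an $\bF$-predictable integrand against an $\bF$-Brownian motion is determined $\P\otimes\ud t$-a.e.), and one must carefully verify the orthogonality $\langle M^i,Z^{*,(k)}\rangle=0$ by invoking the independence of $Y$ and the Brownian drivers. After these two points are in place, the remainder of the proof is routine linear algebra with $\Sigma^{-1}$.
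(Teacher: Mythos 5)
Your proposal follows the same innovations/FKK route as the paper's proof: the Dynkin decomposition of the indicators, optional projection onto $\bF$, the martingale representation of Remark~\ref{inn}, and identification of the gain by comparing the decompositions of a product with the innovation-type process, followed by inversion of the lower-triangular $\Sigma$, and the resulting formulas agree with the statement. The only refinement needed is at the identification step: rather than matching bare expectations of $\pi^i_t I^{(k)}_t$ and $\I_{\{Y_t=e_i\}}I^{(k)}_t$ (which pins down $H^{i,(k)}$ only in mean), one should equate the predictable finite-variation parts of the two $\bF$-semimartingale decompositions of the same optional projection --- exactly what the paper does using the process $m_t=I_t+\int_0^t\Sigma^{-1}\widehat{A(X_u,Y_u)}\,\ud u$ --- which gives the pathwise, $\P\otimes\ud t$-a.e.\ identification directly.
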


Having the dynamics of the filtered probabilities enables us to derive a semimartingale decomposition for the co-integrated asset price processes with respect to the information filtration. Precisely, we have that
\begin{align}\label{eq:convergence_pi}
\frac{\ud S^{(1)}_t}{S^{(1)}_t}&=(r+\beta_1 \mu_m)\ud t+\beta_1\sigma_m \ud B^{(m)}_t+\sigma_1 \ud I^{(1)} + \bmu_1(X_t)^\top\bpi_t \ud t,\\
\label{eq:convergence2_pi}
\frac{\ud S^{(2)}_t}{S^{(2)}_t}&=(r+\beta_2 \mu_m)\ud t+\beta_2\sigma_m \ud B^{(m)}_t\!\!+\sigma_2 \rho \ud I^{(1)}_t\!\!+\sigma_2 \sqrt{1-\rho^2}\ud I^{(2)}_t\!\!+\bmu_2(X_t)^\top\!\bpi_t  \ud t,
\end{align}
with $S^{(1)}_0>0$ and $S^{(2)}_0>0$, and the market index price process $S^{(m)}$ which is not affected by the Markov chain, preserves its dynamics. This leads to the following for the spread and the wealth processes
\begin{align}
\nonumber \ud X_t=&\left(\Gamma_1+\left(\bmu_1(X_t)^\top\bpi_t -\bmu_2(X_t)^\top\bpi_t \right) \right)dt\\
&+\left(\beta_1-\beta_2\right)\sigma_m \ud B^{(m)}_t+(\sigma_1 -\rho \sigma_2)\ud I^{(1)}_t-\sigma_2\sqrt{1-\rho^2} \ud I^{(2)}_t, \quad X_0\in \R,\end{align}
\begin{align}
\nonumber \frac{\ud W^h_t}{W^h_t}=&\left(r\!+\!\!\left(\hm_t\!+\ho_t\beta_1\!+\hd_t\beta_2\! \right)\mu_m\!+\!\left(\ho_t \bmu_1(X_t)^\top\!\bpi_t\! +\hd_t \bmu_2(X_t)^\top\!\bpi_t  \right)\!\right)\ud t\\
\nonumber&+\sigma_m\left(\hm_t+\ho_t\beta_1+\hd_t\beta_2\right)\ud B^{(m)}_t+ (\sigma_1 \ho_t +\rho \sigma_2 \hd)\,\ud I^{(1)}_t\\
&+\sigma_2 \sqrt{1-\rho^2} \hd_t \ud I^{(2)}_t, \quad W_0^h>0,
\label{eq:wealthdyn_pi}\end{align}
respectively.
Moreover, thanks to uniqueness of the solution of the filtering equation we can consider the $(K+2)$-dimensional process $(W,X,\pi)$ as the state process and introduce the equivalent optimal control problem under full information, called the {\em separated} problem; see, e.g., \citet{fleming1982optimal}.
The optimization problem we address now is
\begin{align}\label{optim2}
\mbox{  Maximize } \quad \mathbf{E}^{t,w,x, \p}[\log{W_T}] \quad \mbox{ over all } \  h\in\mathcal{A}^\bF\end{align}
where $\mathbb{E}^{t,w,x,\p}$ denotes the conditional expectation given $W_t=w$, $X_t=x$ and $\bpi_t=\p$, where $(w,x,\p)\in\mathbb R_+\times \mathbb R \times \Delta_K$, with $\Delta_K$ denoting the $(K-1)$-dimensional simplex. Next, we resort to the HJB approach to solve  problem \eqref{optim2}.
We define the value function by
\begin{equation}\label{eq:vf_pi}V(t,w,x,\p):=\underset{h\in\mathcal{A}^\bF}\sup\,\,\mathbf{E}^{t,w,x,\p}\left[\log{W_T}\right].\end{equation}

In order to get explicit form for the value function up to the solution of a system of PDEs we restrict to the case where $\lambda_1(y)=\lambda_1\in \R$ and $\lambda_2(y)=\lambda_2\in \R$. In this case coefficients $H^{(i),1}(X, \pi)$ and $H^{(i),2}(X, \pi)$, for $i=1,\dots,K$ in equation \eqref{eq:conditional_prob} do not depend on $X$  and  are given by
\begin{align*}
 H^{i,(1)}(X_t, \bpi_t)&= H^{i, (1)}(\bpi_t)=\frac{\lambda_1 \pi^i_t (\alpha_1^i-\balpha_1^\top \bpi_t)}{\sigma_1}, \\
 H^{i,(2)}(X_t, \bpi_t)&= H^{i, (2)}(\bpi_t)=\frac{-\lambda_2\sigma_1\pi^i_t(\alpha_2^i-\balpha_2^\top\bpi_t)-\sigma_2\lambda_1\rho\pi^i_t(\alpha_1^i-\balpha_1^\top \bpi_t)}{\sigma_1\sigma_2\sqrt{1-\rho^2}},
\end{align*}
for every $t \in [0,T]$.

\begin{theorem}\label{thm:optimal_partial} Suppose that $\lambda_1(y)=\lambda_1\in \R$ and $\lambda_2(y)=\lambda_2\in \R$ with $\lambda_1+\lambda_2>0$ and assume that the investor has logarithmic utility preferences. Then the optimal portfolio strategy $h^*=({\ho}^{\ast},{\hd}^{\ast},{\hm}^{\ast})\in \mathcal{A}^{\bF}$ is
\begin{eqnarray}\label{opt1_pi}
{\ho}^{\ast}(t,x,\p)&=& \frac{\bmu_1(x)^\top \p-\bmu_2(x)^\top\p\varrho_2}{b_1^2+b_2^2\varrho_2}, \\
\label{opt2_pi}
{\hd}^{\ast}(t,x,\p)&=&\frac{\bmu_2(x)^\top \p-\bmu_1(x)^\top \p\varrho_1}{b_2^2+b_1^2\varrho_1},\\
\label{opt3_pi}{\hm}^{\ast}(t,x,\p)&=& \frac{\mu_m}{\sigma_m^2}-\beta_1{\ho}^{\ast}(t,x,\p)-\beta_2{\hd}^{\ast}(t,x,\p).
\end{eqnarray}
The value function is of the form \begin{equation}V(t, w, x, \p )=\log(w)+\bar m(t)x^2+\bar n(t,\p)x+\bar u(t,\p),\end{equation}
where function $\bar m(t)$ solves the ordinary differential equation
\begin{align}\label{eq:system1_pi}
&\bar m_t(t) -2  \bar m(t) (\lambda_1+\lambda_2)+ \Theta_1 =0
\end{align}
with terminal condition $\bar m(T)=0$ and functions $\bar n(t,\p)$ and $\bar u(t,\p)$  solve the following system of partial differential equations
\begin{align}
\nonumber&\bar n_t(t,\p) - \bar n(t, \p) (\lambda_1+\lambda_2)+ \sum_{i,j=1}^K \bar n_{p^i}(t,\p) q^{ji}p^j + 2(\Gamma_1+\lambda_1\balpha_1^\top \p+\lambda_2\balpha_2^\top \p) \bar m(t)\\
&+\frac{1}{2}\sum_{i,j=1}^K \bar n_{p^ip^j}(t,\p)\left(H^{i, (1)}(\p) H^{j, (1)}(\p)+H^{i, (2)}(\p) H^{j, (2)}(\p)\right)- \Theta_2(\p)=0,\label{eq:system2_pi}
\end{align}

\begin{align}
\nonumber&\bar u_t(t,\p)+ \Gamma_2 \bar m(t) + (\Gamma_1+\lambda_1\balpha_1^\top \p+\lambda_2\balpha_2^\top \p) \bar n(t,\p)+  \Theta_3(\p)\\
&\nonumber +\sum_{i=1}^K \bar u_{p^i}(t,\p) q^{ji} p^j+\frac{1}{2}\sum_{i,j=1}^K\bar u_{p^ip^j}(t,\p)(H^{i, (1)}(\p) H^{j, (1)}(\p)+H^{i, (2)}(\p) H^{j, (2)}(\p)) \\
&+  \sum_{i=1}^K \bar n_{p^i}(t,\p) \left(\frac{b_1^2}{\sqrt{\sigma^2+b_1^2}}H^{i, (1)}(\p) -  \frac{\sqrt{\sigma^2(b_1^2+b_2^2)+b_1^2b_2^2}}{\sqrt{\sigma^2+b_1^2}}   H^{i, (2)}(\p)\right) =0
\label{eq:system3_pi}
\end{align}
with terminal conditions $\bar n(T,\p)=0$ and $\bar u(T,\p)=0$ and where $\Gamma_1$ and $\Gamma_2$ are the same of Theorem \ref{optimalfull} and
\begin{align*}
\Theta_1&=\frac{(\sigma^2+b_2^2)\lambda_1^2+(\sigma^2+b_1^2)\lambda_2^2+2\sigma^2 \lambda_1\lambda_2}{2(\sigma^2b_1^2+\sigma^2b_2^2+b_1^2b_2^2)},\\
\Theta_2(\p)&= \frac{\lambda_1 \balpha_1^\top \p(\lambda_1(b_2^2+\sigma^2)+\lambda_2\sigma^2)+\lambda_2\balpha_2^\top \p(\lambda_2(b_1^2+\sigma^2)+\lambda_1\sigma^2)}{b_1^2b_2^2 +\sigma^2(b_1^2+b_2^2)},\\
\Theta_3(\p)&=\frac{(\lambda_1 b_2\balpha_1^\top \p)^2+(\lambda_2 b_1\balpha_2^\top \p)^2+\sigma^2(\lambda_1 \alpha_1^\top \p +\lambda_2\alpha_2^\top \p)^2}{2\left(b_1^2b_2^2 +\sigma^2(b_1^2+b_2^2)\right)}+r+\frac{\mu_m^2}{2\sigma_m^2}.\end{align*}
\end{theorem}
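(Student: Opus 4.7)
The plan is to mirror the dynamic-programming argument of Theorem \ref{optimalfull}, but now with state process $(W,X,\bpi)$ (the separated problem). First I would write the HJB equation
\[
V_t + \sup_{h\in\R^3}\L^h V = 0,\qquad V(T,w,x,\p)=\log w,
\]
where $\L^h$ is the generator of $(W,X,\bpi)$ read off from \eqref{eq:wealthdyn_pi}, the spread dynamics and Proposition \ref{prop:filtering}. The key observation that makes the ansatz $V(t,w,x,\p)=\log w+\bar m(t)x^2+\bar n(t,\p)x+\bar u(t,\p)$ work is that all cross derivatives $V_{wx}$ and $V_{w\pi^i}$ vanish, so the dependence on $w$ decouples in the usual logarithmic-utility manner: $V_w\mu_W+\tfrac12 V_{ww}\sigma_W^2$ becomes independent of $w$ and the HJB reduces to a PDE on $[0,T]\times\R\times\Delta_K$.

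Next I would carry out the pointwise maximization. Because $V_{wx}=V_{w\pi^i}=0$, the $h$-dependent part of $\L^h V$ is exactly the myopic drift-minus-half-variance of $\log W$. Eliminating $\hm$ first yields $\hm^*=\mu_m/\sigma_m^2-\beta_1\ho-\beta_2\hd$; substituting back leaves the concave quadratic
\[
\ho\,\bmu_1(x)^\top\p+\hd\,\bmu_2(x)^\top\p-\tfrac12\bigl(\sigma_1^2\ho^2+2\sigma^2\ho\hd+\sigma_2^2\hd^2\bigr)+\tfrac{\mu_m^2}{2\sigma_m^2}
\]
(using $\rho\sigma_1\sigma_2=\sigma^2$ and $(\sigma_1-\rho\sigma_2)^2+\sigma_2^2(1-\rho^2)=b_1^2+b_2^2=:\Gamma_2-\sigma_m^2(\beta_1-\beta_2)^2$), whose first-order conditions give exactly \eqref{opt1_pi}--\eqref{opt2_pi}. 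Evaluating the supremum and expanding $\bmu_j(x)^\top\p$ linearly in $x$ (this is where the hypothesis $\lambda_j(y)\equiv\lambda_j$ matters: the coefficient of $x$ is the constant $-\lambda_1$ or $\lambda_2$), one identifies the three $\p$-dependent quantities $\Theta_1$, $\Theta_2(\p)$, $\Theta_3(\p)$ as the $x^2$, $x$ and $x^0$ parts of $\tfrac12[\bmu_1^\top\p,\bmu_2^\top\p]M^{-1}[\bmu_1^\top\p,\bmu_2^\top\p]^\top+r+\mu_m^2/(2\sigma_m^2)$, with $M=\bigl(\begin{smallmatrix}\sigma_1^2 & \sigma^2\\ \sigma^2 & \sigma_2^2\end{smallmatrix}\bigr)$.

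Plugging the ansatz into the reduced HJB and matching powers of $x$ is then routine. The $x^2$ coefficient isolates $\bar m$ because $\lambda_1,\lambda_2$ are constants, which is why $\bar m$ solves the ODE \eqref{eq:system1_pi} and not a PDE. The linear and constant coefficients pick up the filter drift $\sum_{i,j}q^{ji}p^j\partial_{p^i}$ and the filter diffusion $\tfrac12\sum_{i,j}\bigl(H^{i,(1)}H^{j,(1)}+H^{i,(2)}H^{j,(2)}\bigr)\partial^2_{p^ip^j}$ coming from Proposition \ref{prop:filtering}, and the $x$-$\pi$ cross term $\sum_i V_{x\pi^i}\sigma_{X\pi^i}$ yields precisely the last line in \eqref{eq:system3_pi} after simplifying $\sigma_1-\rho\sigma_2=b_1^2/\sqrt{\sigma^2+b_1^2}$ and $\sigma_2\sqrt{1-\rho^2}=\sqrt{\sigma^2(b_1^2+b_2^2)+b_1^2 b_2^2}/\sqrt{\sigma^2+b_1^2}$. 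The terminal condition $V(T,\cdot)=\log w$ forces $\bar m(T)=0$, $\bar n(T,\p)=0$, $\bar u(T,\p)=0$.

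The last and hardest step is verification. I would (i) solve \eqref{eq:system1_pi} explicitly for $\bar m$ (a linear ODE) and invoke an existence/regularity result for the quasi-linear parabolic system \eqref{eq:system2_pi}--\eqref{eq:system3_pi} on $[0,T]\times\Delta_K$, noting that the generator of $\bpi$ is an intrinsic diffusion on the simplex that does not exit $\Delta_K$, so boundary degeneracy is handled in the spirit of the filtering literature; (ii) check that the candidate $h^*$ in \eqref{opt1_pi}--\eqref{opt3_pi} is $\bF$-predictable and satisfies \eqref{eq:integrability} (this reduces to controlling $\esp{\int_0^T X_t^2\,\ud t}$, which follows from the linear SDE for $X$); and (iii) run the standard Itô/localization verification argument, showing that $V(t,W_t,X_t,\bpi_t)$ is a supermartingale for every admissible $h$ and a true martingale at $h=h^*$, so that $V$ equals the value function \eqref{eq:vf_pi}. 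The main obstacle I anticipate is the PDE regularity and the uniform integrability needed to pass from local to true martingale in step (iii); the quadratic growth of $V$ in $x$ combined with Gaussian-type tail estimates for $X$ is what makes this go through, and it is precisely why the ansatz's $x$-dependence was chosen quadratic.
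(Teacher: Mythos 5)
Your proposal follows essentially the same route as the paper's proof: write the HJB equation for the separated problem with state $(W,X,\bpi)$, use the ansatz $V=\log w+\bar m(t)x^2+\bar n(t,\p)x+\bar u(t,\p)$, obtain \eqref{opt1_pi}--\eqref{opt3_pi} from the first-order conditions after eliminating $\hm$, match powers of $x$ to get \eqref{eq:system1_pi}--\eqref{eq:system3_pi}, and verify via It\^o's formula that $V(t,W_t,X_t,\bpi_t)$ is a supermartingale for any admissible $h$ and a martingale at $h^*$. The only difference is cosmetic: the paper's verification argues the stochastic integrals are true martingales directly from admissibility and the boundedness of $\bar m,\bar n,\bar u$ and their derivatives on $[0,T]\times\Delta_K$, whereas you route this through localization, uniform integrability and an explicit appeal to PDE existence/regularity, which is a slightly more cautious rendering of the same argument.
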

The proof of Theorem \ref{thm:optimal_partial} is given in Appendix. We observe here that the function $\bar m$ driving the quadratic term is independent of $\p$. Mathematically this is due to the fact that $\lambda_1$ and $\lambda_2$ are assumed to be constant and therefore the
trader does not account for the effect of partial information on the quadratic level of the current spread. Optimal portfolio strategy under partial information shares similar properties of the full information one (see Remark \ref{remarkfull}), except that unobserved parameters are replaced by the filtered estimates. That is, the \emph{certainty equivalence principle} holds for the optimization problem under partial information; see \citet{kuwana1995certainty} and \citet{bauerle2004port}.

\subsection{Optimal Beta-Neutral Investment under Partial Information}
For comparison purposes we also investigate the structure of  strategies leading to zero (CAPM) beta in the partial information setting. This means to consider investment strategies of the form outlined below.
\begin{definition}An $\bF$-{\em admissible beta-neutral investment strategy} is an  $\bF$-predictable  self-financing investment strategy $h^{\beta}=({h^{(\beta,1)}},{h^{(\beta,2)}},{h^{(\beta,m)}})$ such that
\begin{align}\label{beta-admisspart}
\beta_1h^{(\beta,1)}_t+\beta_2h^{(\beta,2)}_t=0, \quad t \in [0,T]
\end{align}
with $\esp{\int_0^T\!\!\left(\! {h^{(\beta,m)}_t}^2\!\!\!+{h^{(\beta,1)}_t}^2\right)\ud t}<\infty$. We denote by $\mathcal{A}^{\bF,\beta}$ the set of all $\bF$-admissible beta-neutral strategies.
\end{definition}
The optimal beta-neutral investment strategy under restricted information and the corresponding value function are given in Theorem \ref{thm:optimal_partialB} below. The proof is similar to that of Theorem \ref{thm:optimal_partial} and it is therefore omitted.
\begin{theorem}\label{thm:optimal_partialB}Assume that  $\lambda_1(y)=\lambda_1\in \R$ and $\lambda_2(y)=\lambda_2\in \R$ with $\lambda_1+\lambda_2>0$ and consider a trader with a logarithmic utility function. Then, the optimal beta-neutral strategy under partial information ${h^{\beta}}^{\ast}=({h^{(\beta,1)}}^{\ast},{h^{(\beta,2)}}^{\ast},{h^{(\beta,m)}}^{\ast}) \in \mathcal{A}^{ \bF,\beta}$ is
\begin{align*}
&{h^{(\beta,1)}}^{\ast}(t,x,\p)= -\frac{\lambda_1(x-\balpha_1^\top \p)+\frac{\beta^1}{\beta^2}\lambda_2(x-\balpha_2^\top \p)}{b_1^2+b_2^2\frac{\beta_1^2}{\beta^2_2}+\sigma^2\left(1-\frac{\beta_1}{\beta_2}\right)^2}, \\
&{h^{(\beta,2)}}^{\ast}(t,x,\p)=-\frac{\beta^1}{\beta^2}{h^{(\beta,1)}}^{\ast}(t,x,i),\\
&{h^{(\beta,m)}}^{\ast}(t,x,\p)=\frac{\mu_m}{\sigma_m^2}.
\end{align*}
The value function is of the form \begin{equation*}V(t, w, x, \p )=\log(w)+\bar m(t)x^2+\bar n(t,\p)x+\bar u(t,\p),\end{equation*}
where function $\bar m(t)$ solves the ordinary differential equation
\begin{align*}
&\bar m_t(t) -2\bar m(t) (\lambda_1+\lambda_2) + \Phi_1=0,
\end{align*}
with the terminal condition $\bar m(T)=0$ and functions $\bar n(t,\p)$ and $\bar u(t,\p)$  solve the following system of partial differential equations
\begin{align*}
\nonumber&\bar n_t(t,\p) +2  \bar m(t)(\Gamma_1+\lambda_1\balpha_1+\lambda_2\balpha_2)- \bar n(t, \p) (\lambda_1+\lambda_2)+ \sum_{i,j=1}^K \bar n_{p^i}(t,\p) q^{ji}p^j \\
&+\frac{1}{2}\sum_{i,j=1}^K \bar n_{p^ip^j}(t,\p)\left(H^{i, (1)}(\p) H^{j, (1)}(\p)+H^{i, (2)}(\p) H^{j,(2)}(\p)\right)- \Phi_2(\p)=0,
\end{align*}
\begin{align*}
\nonumber&\bar u_t(t,\p)+ (\Gamma_1+\lambda_1\balpha_1+\lambda_2\balpha_2) \bar n(t,\p) +\Gamma_2 \bar m(t) + \Phi_3(\p)\\
&\nonumber +\sum_{i,j=1}^K \bar u_{p^i}(t,\p) q^{ji} p^j+\frac{1}{2}\sum_{i,j=1}^K\bar u_{p^ip^j}(t,\p)(H^{i, (1)}(\p) H^{j, (1)}(\p)+H^{i, (2)}(\p) H^{j,(2)}(\p)) \\
&+  \sum_{i=1}^K \bar n_{p^i}(t,\p) \left(\frac{b_1^2}{\sqrt{\sigma^2+b_1^2}}H^{i, (1)}(\p) -  \frac{\sqrt{\sigma^2(b_1^2+b_2^2)+b_1^2b_2^2}}{\sqrt{\sigma^2+b_1^2}}   H^{i, (2)}(\p)\right) =0
\end{align*}
with terminal conditions $\bar n(T,\p)=0$ and $\bar u(T,\p)=0$ and where $\Gamma_1$ and $\Gamma_2$ are the same of Theorem \ref{optimalfull} and
\begin{align*}
\Phi_1&=\frac{\left(\beta_2\lambda_1+\beta_1\lambda_2\right)^2}{2\left(b_1^2\beta_2^2+b_2^2\beta_1^2+\sigma^2\left(\beta_1-\beta_2\right)^2\right)},\\
\Phi_2(\p)&=\frac{\left(\beta_2\lambda_1 \balpha_1^\top\p +\beta_1\lambda_2\balpha_2^top\p\right)\left(\beta_1\lambda_2+\beta_2\lambda_1\right)}{b_1^2\beta_2^2+b_2^2\beta_1^2+\sigma^2\left(\beta_1-\beta_2\right)^2},\\
\Phi_3(\p)&=\frac{\left(\beta_2\lambda_1\balpha_1^\top \p+\beta_1\lambda_2\balpha_2^\top\p\right)^2}{2\left(b_1^2\beta_2^2+b_2^2\beta_1^2+\sigma^2\left(\beta_1-\beta_2\right)^2\right)} +r+\frac{\mu_m^2}{2\sigma_m^2}.
\end{align*}

\end{theorem}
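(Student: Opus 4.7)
The strategy is to mimic the proof of Theorem~\ref{thm:optimal_partial}, enforcing the constraint \eqref{beta-admisspart} throughout. First, I would use $h^{(\beta,2)}_t = -\frac{\beta_1}{\beta_2}\,h^{(\beta,1)}_t$ to eliminate one control from the wealth equation \eqref{eq:wealthdyn_pi}, reducing the problem to one with two independent controls $h^{(\beta,1)}$ and $h^{(\beta,m)}$. After the substitution, the $B^{(m)}$-diffusion coefficient collapses to $\sigma_m\,h^{(\beta,m)}$ (so the market-index block decouples), while the innovation diffusion and the $x$-dependent drift become affine in the single control $h^{(\beta,1)}$ with coefficients carrying the factor $\beta_1/\beta_2$. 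The coefficient of $(h^{(\beta,1)})^2$ in the resulting quadratic form is precisely $b_1^2+b_2^2(\beta_1/\beta_2)^2+\sigma^2(1-\beta_1/\beta_2)^2$, which will become the denominator of ${h^{(\beta,1)}}^\ast$.

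Second, I would write the HJB equation for the separated control problem with state $(W,X,\bpi)$, using the filter dynamics \eqref{eq:conditional_prob}, the spread dynamics expressed in terms of $I$, and the reduced wealth dynamics from step one. Postulating the log-quadratic ansatz $V(t,w,x,\p)=\log w+\bar m(t)x^2+\bar n(t,\p)x+\bar u(t,\p)$ linearises the $w$-dependence and turns the HJB equation into a pointwise optimisation in $(h^{(\beta,1)}, h^{(\beta,m)})$ on top of a second-order parabolic operator in $(t,x,\p)$. First-order conditions immediately give ${h^{(\beta,m)}}^\ast=\mu_m/\sigma_m^2$ and the stated affine-in-$x$ expression for ${h^{(\beta,1)}}^\ast$. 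Plugging the feedbacks back into the HJB equation and collecting coefficients of $x^2$, $x^1$ and $x^0$ yields the ODE for $\bar m$ with source $\Phi_1$, and the two linear PDEs for $\bar n$ and $\bar u$ displayed in the statement. Exactly as in Theorem~\ref{thm:optimal_partial}, the quadratic coefficient $\bar m$ is independent of $\p$ because $\lambda_1,\lambda_2$ are constants, so the $\p$-gradients of $\bar m$ never appear in the HJB equation.

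Finally I would carry out the verification. The ODE for $\bar m$ is linear with constant coefficients and admits an explicit bounded solution; the PDE for $\bar n$ is linear with smooth, bounded coefficients on $[0,T]\times\Delta_K$ thanks to Proposition~\ref{prop:filtering} and is solved by a Feynman--Kac representation along the filter diffusion, and similarly for $\bar u$ once $\bar n$ is known. Since the candidate ${h^\beta}^\ast$ is affine in $(x,\bpi)$ and $\bpi$ takes values in the compact simplex $\Delta_K$, standard moment estimates for the linear SDE satisfied by $X$ yield the integrability \eqref{beta-admisspart} and admissibility of ${h^\beta}^\ast\in\mathcal A^{\bF,\beta}$; moreover, a localisation-plus-uniform-integrability argument shows that the stochastic part of the It\^o decomposition of $V(t,W_t,X_t,\bpi_t)$ is a true $(\bF,\P)$-martingale for every $h^\beta\in\mathcal A^{\bF,\beta}$, which closes the martingale optimality argument.

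The main obstacle will be the verification step for the PDE on $\Delta_K$, because the filter diffusion $H^{i,(k)}(\p)$ degenerates on $\partial\Delta_K$; I expect to handle this with the same device used in the proof of Theorem~\ref{thm:optimal_partial}, namely representing $\bar n$ and $\bar u$ probabilistically along the filter, which is a well-defined $\Delta_K$-valued process, and exploiting the boundedness of $\bpi$ to avoid direct reliance on uniform parabolicity.
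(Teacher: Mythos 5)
Your proposal is correct and follows essentially the same route as the paper, which omits the proof precisely because it replicates that of Theorem~\ref{thm:optimal_partial}: substitute the constraint $h^{(\beta,2)}=-\tfrac{\beta_1}{\beta_2}h^{(\beta,1)}$ into the separated problem, note that the market exposure collapses to $\sigma_m h^{(\beta,m)}$, take first-order conditions in the HJB with the ansatz $\log w+\bar m(t)x^2+\bar n(t,\p)x+\bar u(t,\p)$, and verify via the martingale argument using boundedness on $[0,T]\times\Delta_K$; your quadratic coefficient $b_1^2+b_2^2\beta_1^2/\beta_2^2+\sigma^2(1-\beta_1/\beta_2)^2$ and the resulting $\Phi_1,\Phi_2,\Phi_3$ match. (Only a cosmetic slip: the integrability requirement you invoke is the moment condition in the definition of $\mathcal{A}^{\bF,\beta}$, not equation \eqref{beta-admisspart}, which is the beta-neutrality constraint itself.)
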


Finally we again observe that for $\beta_1=\beta_2$ we recover delta-neutral strategies in partial information which are given by
\begin{gather*}
{h^{(\delta,1)}}^{\ast}(t,x,\p)=-{h^{(\delta,2)}}^{\ast}(t,x,\p)=-\frac{\lambda_1(x-\balpha_1^\top \p)+\lambda_1(x-\balpha_1^\top \p)}{b_1^2+b_2^2},\\
\quad {h^{(\delta,m)}}^{\ast}(t,x,\p)=\frac{\mu_m}{\sigma_m^2}.
\end{gather*}

\section{Numerical Study with a 2-State Markov Chain}\label{sec:2stateMC}In this section, we consider a 2-state Markov chain $Y$, that is, $\mathcal{E}=\{e_1, e_2\}$. Here we resort to a numerical approach in order to get qualitative characteristics of optimal strategies and the value function both under full and partial information. In the sequel, we fix the values for the following parameters as $w=1$, $r=0.02$, $\beta_1=1.2$, $\beta_2=1.05$, $\sigma_m=.35$, $\mu_m=0.05$ and $\sigma=0.3$.

\subsection{Optimization problem under full information}
We first consider the full information setting where the trader is assumed to observe the state of the Markov chain. We begin with a simulation of a data set where we generate a Markov chain and the price processes, respectively. In Figure~\ref{fig:strategies} we investigate the behavior of the optimal investment strategy for the simulated data. Clearly, we see that strategies depend on different regimes and present jumps at the jump times of the Markov chain. We also observe that the resulting optimal portfolio weights for the first and second assets change sign through time. In particular, we have long-long, long-short and short-short type optimal portfolios, which may indicate the flexibility of our modeling framework.

\begin{figure}[htbp]
\begin{center}
\includegraphics[width=12cm,height=7cm]{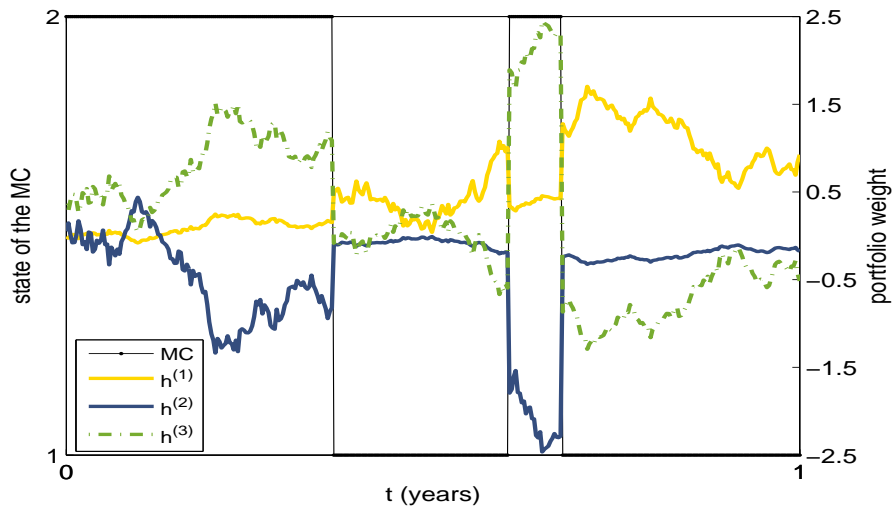}
\caption{Optimal trading strategy for simulated data. Parameter values: $b_1=0.3$, $b_2=0.2$, $\lambda_1^1=0.5$, $\lambda_1^2=-0.3$, $\lambda_2^1=-0.2$, $\lambda_2^2=0.6$, $\alpha_1=\alpha_2=0$, $x=0.01$, $q^{12}=0.01$, $q^{21}=0.02$.   }
\label{fig:strategies}
\end{center}
\end{figure}

Next we investigate the properties of the value function. To this, we solve the system of ODEs in Theorem \ref{optimalfull} numerically. Figure~\ref{fig:value} summarizes our results. Let $(\overline{p},1-\overline{p})$ denote the stationary distribution of the Markov chain $Y$. We consider two traders, one of which ignores the Markov modulated nature of the underlying parameters and use the {\em averaged data} $\overline{\lambda}_i=\overline{p}\lambda_i^1+(1-\overline{p})\lambda_i^2$. The second trader, on the other hand, behaves optimally under our Markov modulated model. We set $q^{12}=0.7$ and $q^{21}=0.2$, and compute $\overline{p}={q^{21}}/(q^{12}+q^{21})=0.22$. Then, we get $\overline{\lambda}_1=-0.12$ and $\overline{\lambda}_2=0.45$. In Figure~\ref{fig:value} we plot $V^{Av}(t,x)$, the value function obtained in the model assuming averaged data, and $\mathbb{E}^{\overline{p}}[V(t,x,Y_t)]=\overline{p} V(t,x,1)+(1-\overline{p})V(t,x,2)$. We observe that $\mathbb{E}^{\overline{p}}[V(t,x,Y_t)]>V^{Av}(t,x)$, that is, averaged data does not suffice to obtain the optimal value for the convergence trade problem and hence on the average, the second trader performs better than the first one. We repeat this analysis for the case of beta-neutral trading and obtain same qualitative results.

In Figure~\ref{fig:value} we also illustrate the dominance of the unrestricted strategies over the beta-neutral ones. This is quite natural since by restricting the set of admissible strategies, the trader could not realize all the benefits resulting from the co-integration between $S^{(1)}$ and $S^{(2)}$. The gap between values depends on the choice of parameters and in particular it increases in initial spread ($x$) and time to maturity ($T-t$).

\begin{figure}[htbp]
\begin{center}
\includegraphics[width=11cm,height=6cm]{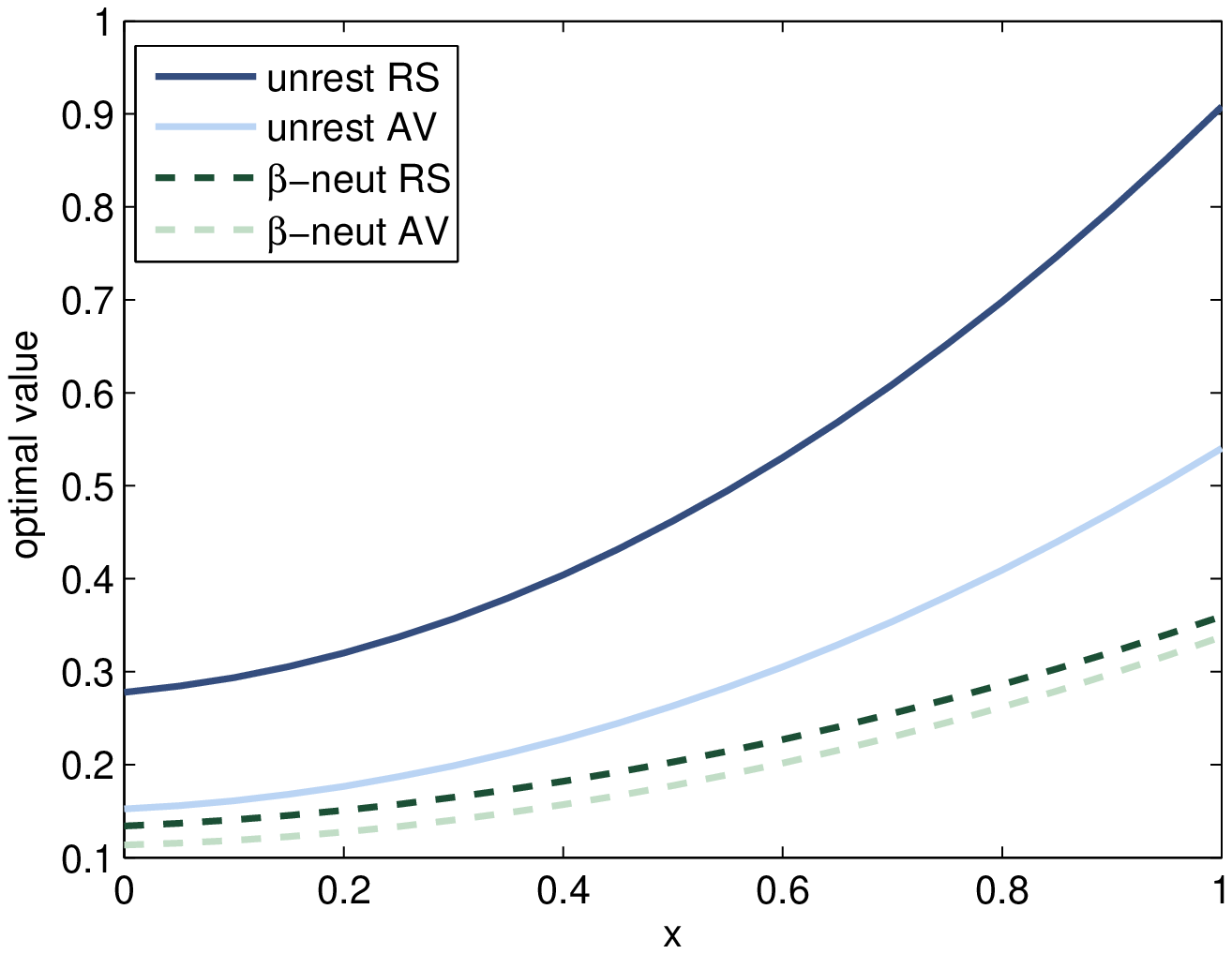}
\includegraphics[width=11cm,height=6cm]{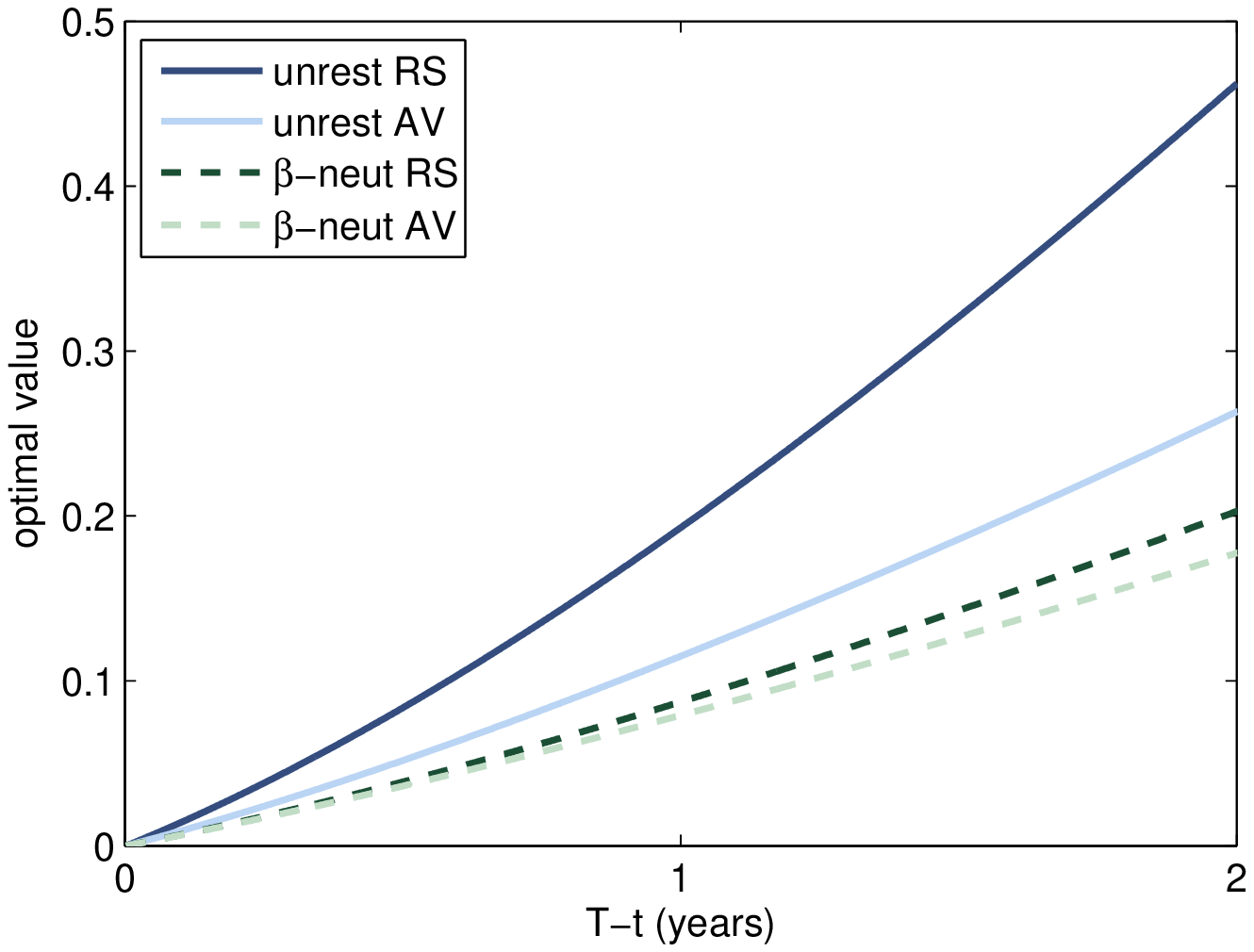}
\caption{Optimal value corresponding to Markov regime-switching (RS) case and averaged data (AV) case for unrestricted and $\beta$-neutral trading as a function of initial spread ($x$) (upper panel) and time to maturity (T-t) (lower panel). Parameter values: T-t=2, $b_1=0.3$, $b_2=0.5$, $\lambda_1^1=0.5$, $\lambda_1^2=-0.3$, $\lambda_2^1=-0.1$, $\lambda_2^2=0.6$, $\alpha_1=\alpha_2= 0$, $x=0.5$, $q^{12}=0.7$, $q^{21}=0.2$.  }
\label{fig:value}
\end{center}
\end{figure}

\subsection{Optimization problem under partial information}
We now consider the partial information case. Since conditional state probabilities $\pi^1$ and $\pi^2$ satisfy $\pi^1_t+\pi^2_t=1$ for every $t \in [0,T]$, we can reduce the number of state variables for the optimization problem. In the following we denote by $p=p^1$ and $1-p=p^2$. In Figure \ref{fig:full_part_strategies} we plot the optimal strategies followed by a fully informed investor who observes  the state of the underlying Markov chain $Y$ and the partially informed one who can only estimate  the state of $Y$ through observation of prices, for the simulated data. We observe that while the informed investor suddenly changes his behavior according to the change in the state of the Markov chain, portfolio weights for the partially informed investor are subject to the smoothing effect of filtering. The difference between portfolio allocation under full and partial information crucially depends on the amplitude of the noise.

\begin{figure}[htbp]
\begin{center}
\includegraphics[width=12cm,height=8cm]{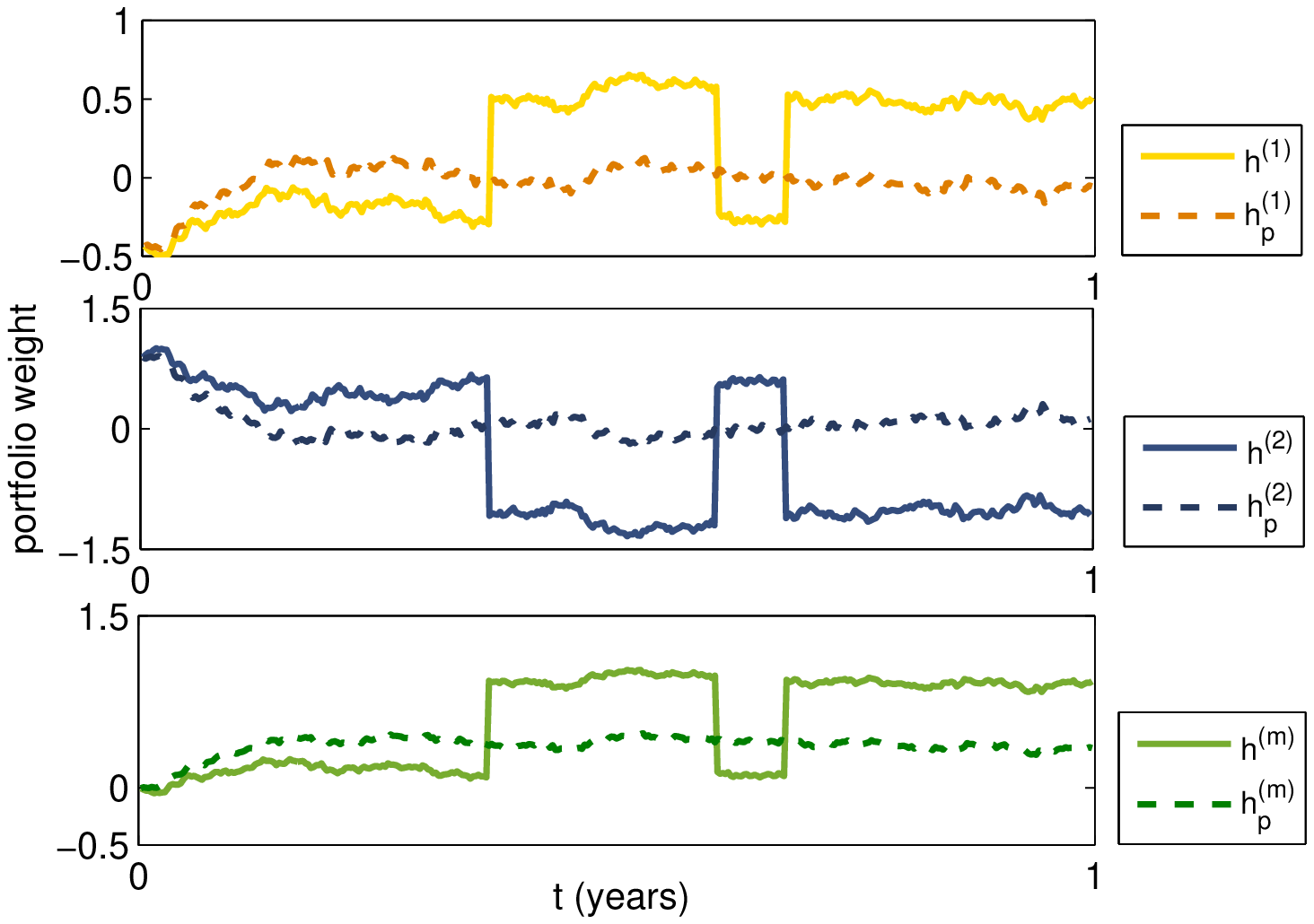}
\caption{Comparison of optimal strategy under full ($h^{(\cdot)}$) and partial information ($h^{(\cdot)}_p$) for simulated data. Parameter values: $\lambda_1\equiv 0.2$, $\lambda_2\equiv 0.15$, $\alpha_1^1=-0.4$,  $\alpha_1^2=0.1$, $\alpha_2^1=0.5$, $\alpha_2^2=-0.5$, $p_0=0$, $x=0.01$, $q^{12}=0.01$, $q^{21}=0.02$. }
\label{fig:full_part_strategies}
\end{center}
\end{figure}

Now we measure the advantage of the fully informed investor over the partially informed one. In order to do that, we consider the process $L$ defined by
\begin{equation*}
L_t=\mathbb{E}\left[V^f(t,W_t,X_t,Y_t)-V^p(t,W_t,X_t, \bpi_t)|\{W_t=w\}\vee \F_t\right], \ t\in [0,T] ,
\end{equation*}
where $V^f$ represents the value function corresponding to the full information setting and $V^p$ that corresponding to the partial information one. The process $L$ represents the loss of utility due to partial information (see, e.g., \citet{lee2016pairs} for the definition). By the form of value functions and Markov property of the pair $(X,\bpi)$ we get that there exists a function $l(t, x, \p)$ such that $L_t=l(t, X_t, \bpi_t)$, $\P-a.s.$ for every $t\in[0,T]$. In Figure \ref{fig:loss} we plot the loss of utility in the 2-state Markov chain case.
We observe that it is always grater than or equal to zero, meaning that information premium exists and it is always nonnegative. Moreover it is larger when conditional state probabilities are close to $0.5$. This reflects the fact that more uncertainty about the state of the Markov chain leads to higher losses in utility.

\begin{figure}[htbp]
\begin{center}
\includegraphics[width=12cm,height=8cm]{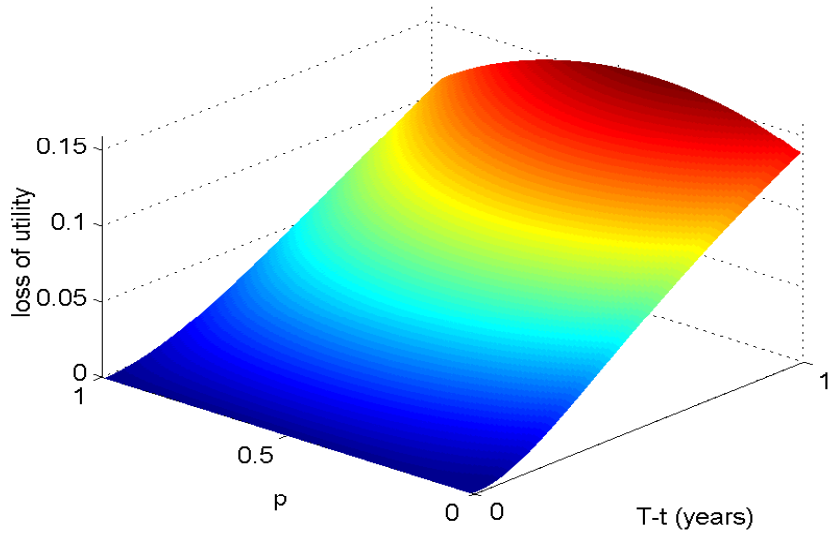}
\caption{Loss of utility due to partial information as a function of estimated state probability (p) and time to maturity (T-t). Parameter values: $\lambda_1\equiv 0.3$, $\lambda_2\equiv 0.4$, $\alpha_1^1=0.5$, $\alpha_1^2=-0.2$, $\alpha_2^1=0.2$, $\alpha_2^2=-0.3$, $x=0.05$, $q^{12}=0.2$, $q^{21}=0.5$. }
\label{fig:loss}
\end{center}
\end{figure}
\section{Conclusion}\label{sec:concl}
In this paper, we have considered an extension to a regime-switching framework of the model proposed by \citet{liu2013optimal}. We have studied the optimization problem for a trader with logarithmic utility preferences under different levels of information. We have assumed that the mean-reverting component of pricing errors depends on a hidden Markov switching factor which may or may not be directly observed by the investor.

In the full information setting, that is when the state of the Markov chain is observable, we have computed the optimal strategy and characterized the value function as the unique (classical) solution of the HJB equation. In this framework, we can reduce the HJB to a system of ODEs. We have also discussed the structure of beta-neutral strategies, achieved by taking long and short positions in such a way that the impact of the overall market is minimized. In the partial information case, we have transformed the original problem into the so-called reduced (or separated) problem via filtering by replacing unobservable states of the Markov chain with their optional projections over the available filtration. Then we have addressed the resulting control problem by dynamic programming, and we have represented the value function in terms of the solution of a system of PDEs. Beta-neutral strategies are also obtained in the partial information framework. Finally, we have studied a numerical example with a two-state Markov chain. We have concluded that averaged data is not sufficient to obtain the optimal value in the full information case, and that there is always positive premium due to information superiority when we compare the optimal value under full and partial information.

\setcounter{equation}{0}
\renewcommand\theequation{A.\arabic{equation}}
\section*{Appendix: Proofs}
\begin{proof}[Proof of Theorem \ref{optimalfull}]
Existence: We denote by $\mathcal L^h_\bG$ the generator of the process $(t,W,X,Y)$, that is
\begin{align*}
\nonumber&\mathcal L^h_\bG F(t,w,x,i)=F_t(t,w,x,i)\!+\!\left(\!\Gamma_1 +\lambda_1^i\alpha_1^i+\lambda_2^i\alpha_2^i-\!(\lambda_1^i\!+\!\lambda_2^i)x\! \right)\! F_x(t,w,x,i)\\
\nonumber+&w\!\left(\!r+\!\mu_m\!\left(\! \hm\!+\!\ho\!\beta_1\!+\!\hd\!\beta_2\!\right)\!+\!\hd \!\lambda_2^i(x-\alpha_2^i)\!-\!\ho\! \lambda_1^i(x-\alpha_1^i)\! \right)F_{w}(t,w,x,i) \\
 \nonumber +&\frac{1}{2}\!w^2\! \!\left(\!\sigma_m^2\!\left(\hm\!\!+\!\ho\!\beta_1\!+\!\hd\!\beta_2\!\right)\!\!^2\!+\!\sigma^2\!\!\left(\!\ho\!\!+\!\hd\!\right)\!^2\!+\!(\ho\!b_1)^2\!+\!(\hd\!b_2)^2\!\right)\!\!F_{ww}(t,w,x,i)\\
 \nonumber+&w\!\left(\!\sigma_m^2\!(\!\beta_1\!-\!\beta_2\!)\!\left(\!\hm\!+\!\ho\!\beta_1\!+\!\hd\!\beta_2\!\right)\!\!+\!\ho\! b_1\!-\!\hd\! b_2 \right)\!F_{wx}(t,w,x,i)\!\!\\
+&\!\frac{1}{2}\Gamma_2F_{xx}(t,w,x,i)+\!\!\sum_{j=1}^K\!\! F\!(t,w,x,j)q^{ij}\!\!,
\end{align*}
for every function $F(\cdot, i)\in C^{1,2,2}([0,T]\times\mathbf{R}_{+}\times\mathbf{R})$, i.e. bounded, differentiable with respect to $t$  and twice differentiable with respect to $w$ and $x$,  for every $i\in \{1, \dots, K\}$.

Suppose that the value function $V(\cdot, i)\in C^{1,2,2}([0,T]\times\mathbf{R}_{+}\times\mathbf{R})$  for every $i\in\{1,\dots,K\}$. Then it solves the HJB equation given by
\begin{equation}\label{pvuci}
 0=\underset{h\in\mathcal{A}}\sup \mathcal L^h V(t,w,x,i)
\end{equation}
for every $i\in \{1,\dots,K \}$, subject to the terminal condition $V(T,w,x,i)=\log(w)$, for all $(w,x)\in\mathbf{R}_{+}\times\mathbf{R}$ and $i\in \{1,\dots,K \}$.
It follows from the form of the utility function that for all $i\in \{1,\dots,K \}$ the value function can be rewritten as $V(t,w,x,i)=\log(w)+\nu(t,x,i)$, for some function $\nu(t,x,i)$ such that $\nu(T,x,i)=0$.
Inserting the ansatz for the value function in equation \eqref{pvuci} and taking first order conditions leads to
\begin{align*}
 0=&\!\frac{\mu_m}{\sigma_m^2}-\ho\beta_1-\hd\beta_2-\hm,\\
 0=&\!\beta_1\mu_m-\!\lambda_1^i\!(x\!-\alpha_1^i\!)-\!\beta_1\sigma_m^2\!\left(\!\hm\!+\!\ho\!\beta_1\!+\!\hd\!\beta_2\!\right)\!-\!\sigma^2\!(\!\ho\!+\!\hd)\!-\!\ho\! b_1^2 ,\\
 0=&\!\beta_2\mu_m+\!\lambda_2^i\!(x\!-\alpha_2^i\!)-\!\beta_2\sigma_m^2\!\left(\! \hm\!+\!\ho\beta_1\!+\!\hd\!\beta_2\!\right)\!-\!\sigma^2\!(\!\ho\!+\!\hd)\!-\!\hd \!b_2^2. \qquad {}
\end{align*}
Second order conditions imply that portfolio weights given in \eqref{opt1}-\eqref{opt3} are candidates to be optimal strategies. Next, we insert the optimal portfolio weights in the HJB equation. This yields the following PDE:
\begin{align}
\nonumber 0=&\nu_t(t,x,i)+\Theta_1^ix^2-\Theta_2^ix+\Theta_3^i+\!\sum_{j=1}^K \!\nu(t,x,j)q^{ij}\!+\!  \frac{1}{2}\Gamma_2 \nu_{xx}(t,x,i)\\
&+\left(\Gamma_1+\lambda_1^i\alpha_1^i+\lambda_2^i\alpha_2^i-(\lambda_1^i+\lambda_2^i)x\! \right)\! \nu_x(t,x,i). \label{eq:HJBfull}
\end{align}
We conjecture that $\nu(t,x,i)=m(t,i)x^2+n(t,i)x+u(t,i)$. Substituting this ansatz in \eqref{eq:HJBfull} results in a quadratic equation for $x$. Setting the coefficients of the terms $x^2$, $x$ and the independent term to zero yields that the functions $m$, $n$ and $u$ solve the system of ODEs given in \eqref{eq:system1}-\eqref{eq:system3}; see, e.g., \citet[Theorem 3.9]{teschl2012ordinary}).

Verification: In the sequel we verify martingale conditions that ensure that $V$ in \eqref{eq:value_function} is indeed the value function. To this, let $v(t,w,x,i)$ be a solution of the HJB equation \eqref{pvuci} and $h\in \mathcal A$ and admissible control. By It\^{o}'s formula we get
\begin{align*}
&v(T, W_T^h, X_T, Y_T)=v(t,w,x,i)+\int_t^T \mathcal Lv(r, W^h_r, X_r, Y_r)\, \ud r\\
&+\int_t^T \sigma_m v_w(r, W^h_r, X_r, Y_r) W^h_r\left(\hm_r+\ho_r\beta_1+\hd_r\beta_2\right)\ud B^{(m)}_r\\
&+\int_t^T \sigma_m+ v_x(r, W^h_r, X_r, Y_r)  \left(\beta_1-\beta_2\right) \ud B^{(m)}_r\\
&+\int_t^T v_w(r, W^h_r, X_r, Y_r) W^h_r \sigma\left(\ho_r+\hd_r \right) \ud B^{(0)}_r\\
&+\int_t^T \left(v_w(r, W^h_r, X_r, Y_r) W^h_r b_1\ho_r + v_x(r, W^h_r, X_r, Y_r) b_1\right) \ud B^{(1)}_r\\
&+\int_t^T \left(v_w(r, W^h_r, X_r, Y_r) W^h_r b_2\hd_r- v_x(r, W^h_r, X_r, Y_r) b_2\right) \ud B^{(2)}_r\\
&+\int_t^T\sum_{j=1}^Kv(r, W^h_r, X_r, j)-v(r, W^h_r, X_r, Y_{r^-}) (m-\nu)(\ud r \times \{j\}).
\end{align*}
The last term in the expression above corresponds to the compensated integral with respect to the jump measure of $Y$, that is
\begin{align*}
&\int_t^T\sum_{j=1}^Kv(r, W^h_r, X_r, j)-v(r, W^h_r, X_r, Y_{r^-}) (m-\nu)(\ud r \times \{j\})=\\
&\sum_{t\leq r\leq T}\Delta v(r, W^h_r, X_r, Y_r)-\!\! \int_t^T\!\!\!\sum_{j=1}^K\!\!v(r, W^h_r, X_r, j)-v(r, W^h_r, X_r, Y_{r^-}) q^{Y_{r^-} j}\, \ud r.
\end{align*}
where $\Delta v(t, W^h_t, X_t, Y_t)=v(t, W^h_t, X_t, Y_t)-v(t, W^h_t, X_t, Y_{t^-})$ for every $t \in [0,T]$,
\[
m([0,t]\times\{j\}):=\sum_{n \ge 1} \I_{\{Y_{T_n}=j\}}\I_{\{T_n\leq t\}},\quad j\in\{1,\dots,K\},\,\,t\in[0,T],
\]
is the jump measure of Markov chain $Y$ with the compensator
\[
\nu([0,t]\times\{j\})=\int_0^t\sum_{i \neq j} q^{ij}\I_{\{Y_{r^-}=i\}}\,\ud r,\quad j\in\{1,\dots,K\},\,\,t\in[0,T].
\]
and $\{T_n\}_{n \in \bN}$ is the sequence of jump times of $Y$.
Since $v$ satisfies equation \eqref{pvuci} we get
\begin{align*}
&v(T, W_T^h, X_T, Y_T)\leq \\
&\quad v(t,w,x,i)+\!\! \int_t^T\!\!\!\! \sigma_m v_w(r, W^h_r, X_r, Y_r) W^h_r\left(\hm_r+\ho_r\beta_1+\hd_r\beta_2\right)\ud B^{(m)}_r\\
&+\int_t^T \sigma_m+ v_x(r, W^h_r, X_r, Y_r)  \left(\beta_1-\beta_2\right) \ud B^{(m)}_r\\
&+\int_t^T v_w(r, W^h_r, X_r, Y_r) W^h_r \sigma\left(\ho_r+\hd_r \right) \ud B^{(0)}_r\\
&+\int_t^T \left(v_w(r, W^h_r, X_r, Y_r) W^h_r b_1\ho_r + v_x(r, W^h_r, X_r, Y_r) b_1\right) \ud B^{(1)}_r\\
&+\int_t^T \left(v_w(r, W^h_r, X_r, Y_r) W^h_r b_2\hd_r- v_x(r, W^h_r, X_r, Y_r) b_2\right) \ud B^{(2)}_r\\
&+\int_t^T\sum_{j=1}^Kv(r, W^h_r, X_r, j)-v(r, W^h_r, X_r, Y_{r^-}) (m-\nu)(\ud r \times \{j\}).
\end{align*}
The form of $v$ and integrability condition \eqref{eq:integrability} ensure that integrals with respects to Brownian motions $B^{(m)}, B^{(0)},B^{(1)},B^{(2)} $ and the compensated jump measure $m-\nu$ are true $(\bG, \P)$-martingales. Then, taking expectations we get that
\begin{align*}
V(t,w,x,i)\leq v(t,w,x,i),
\end{align*}
and the equality holds if $h$ is a maximizer of equation \eqref{pvuci}.
\end{proof}


\begin{proof}[Proof of Proposition~\ref{prop:filtering}]
In the following we use the notation $\widehat{g(Y_t)}=\mathbf{E}\left[g(Y_t)|\F_t\right]$, $t\in[0,T]$. Consider the semimartingale decomposition of $f(Y)$ given by
\begin{align*}
f(Y_t)= f(Y_0)+\int_0^t \langle Q\mathbf f, Y_{u^-}\rangle\, \ud u + M^{(1)}_t,\quad t \in [0,T],
\end{align*}
where $M^{(1)}$ is a $(\mathbb G, \P)$-martingale. Now, projecting over $\bF$ leads to
\begin{align*}
\widehat{f(Y_t)}- \widehat{f(Y_0)}-\int_0^t \langle Q\mathbf f, \widehat{Y}_{u^-}\rangle\, \ud u = M^{(2)}_t,\quad t \in [0,T],
\end{align*}
where $M^{(2)}$ is an $(\bF, \P)$-martingale. Using the martingale representation in \eqref{eq:mg_representation} we get
\begin{align*}
\widehat{f(Y_t)}- \widehat{f(Y_0)}-\int_0^t \langle Q\mathbf f, \widehat{Y}_{u^-}\rangle\, \ud u = \int_0^t\gamma_u\, \ud I_u,\quad t \in [0,T].
\end{align*}
Let $m_t=I_t+\int_0^tX_u \Sigma^{-1} \widehat{A(X_u,Y_u)}\,\ud u$, for every $t \in [0,T]$. Computing the product $f(Y)\cdot m$ and projecting on $\bF$, we obtain
\begin{align}\label{eq:filt1}
\widehat{f(Y_t)\cdot m_t}= \int_0^t\!\!\! m_u \langle Q\mathbf f, \widehat{Y}_u\rangle \, \ud u + \int_0^t\!\! X_u \Sigma^{-1}\widehat{f(Y_u) A(X_u,Y_u)}\,\ud u + M^{(3)}_t,\,\,
\quad{}\end{align}
for every $ t \in [0,T]$ and for some $(\bF, \P)$-martingale $M^{(3)}$.
We now compute the product  $\widehat{f(Y)}\cdot m$ as
\begin{align}\label{eq:filt2}
\widehat{f(Y_t)} \cdot m_t=\!\! \int_0^t \!\!\!m_u \langle Q\mathbf f, \widehat{Y}_u\rangle\,\ud u \!+\!\! \int_0^t\!\!\! X_u \Sigma^{-1}\widehat{f(Y_u)} \widehat{A(X_u,Y_u)}\, \ud u \!+\!\! \int_0^t\!\!\! \gamma_u \,\ud u \!+ M^{(4)}_t.
\end{align}
for every $t \in [0,T]$, where $M^{(4)}$ is an $(\bF, \P)$-martingale.
Comparing the finite variation terms in \eqref{eq:filt1} and \eqref{eq:filt2}, we get
\begin{align*}
\gamma^{(1)}_t&=\frac{\widehat{f(Y_t) \mu_1(X_t,Y_t)}-\widehat{f(Y_t)}\widehat{\mu_1(X_t,Y_t)}}{\sigma_1},\\
\gamma^{(2)}_t&=\frac{\sigma_1 (\widehat{f(Y_t)\mu_2(X_t,Y_t)}-\widehat{f(Y_t)}\widehat{\mu_2(X_t,Y_t)})-\sigma_2\rho(\widehat{f(Y_t)\mu_1(X_t,Y_t)}-\widehat{f(Y_t)}\widehat{\mu_1(X_tY_t)})}{\sigma_1\sigma_2\sqrt{1-\rho^2}},
\end{align*}
for every $t \in [0,T]$. By taking $f(Y_t)=\I_{\{Y_t=e_i\}}$, we obtain the result.
Finally, since the drift and diffusion coefficients in \eqref{eq:conditional_prob} are continuous, bounded and locally Lipschitz, we get that $\bpi=(\pi^1, \dots, \pi^K)$ is the unique strong solution of the system \eqref{eq:conditional_prob} .
\end{proof}

\begin{proof}[Proof of Theorem \ref{thm:optimal_partial}]
\textit{Existence:}
For notational ease we set $\sigma_1=\sqrt{\sigma^2+b_1^2}$ and $\sigma_2=\sqrt{\sigma^2+b_2^2}$. Assume first that function $V(t,w,x,\p)$ is regular. Then it satisfies the following  HJB equation
\begin{align}
0= \underset{h\in\mathcal{A}^{\bF}}  \sup \mathcal L_\bF^h V(t,w,x,\p)\label{eq:HJB_pi}
\end{align}
subject to the terminal condition $V(T,w,x,\p)=\log (w)$, for all $w>0$, $x\in\mathbb{R}$ and for every $\p\in \Delta_K$, where $\mathcal L_\bF^h$ is given by
\begin{align}
\nonumber&\L_\bF^h f(t,w,x,\p)=\bigg{\{} f_t + f_x\left(\Gamma_1+\bmu_1(x)^\top \p-\bmu_2(x)^\top \p\right)+ \sum_{i,j=1}^K f_{p^i} q^{ji}p^j\\
\nonumber&+  \left(  r+\left(\hm+ \ho\beta_1+\!\hd  \beta_2 \right)\mu_m +\ho \bmu_1(x)^\top \p + \hd \bmu_2(x)^\top \p \right) w f_w(t,w,x,\p) \\
\nonumber& + \frac{1}{2} f_{xx}\Gamma_2 + \frac{1}{2} \sum_{i,j=1}^Kf_{p^ip^j} (H^{(i),1}(\p) H^{(j),1}(\p)+H^{(i),2}(\p) H^{(j),2}(\p))\\
\nonumber& +\frac{1}{2} f_{ww}w^2 \left((\hm+ \ho\beta_1+\!\hd  \beta_2 )^2\sigma_m^2 +(\sigma_1 \ho+\rho\sigma_2\hd)^2 + \sigma_2^2 (1-\rho^2){\hd}^2\right)\\
\nonumber& + f_{wx}w \left( \sigma_m^2(\beta_1-\beta_2)(\hm+ \ho\beta_1+\!\hd  \beta_2 )+\sigma_1^2 \ho- \sigma_2^2 \hd-\rho\sigma_1\sigma_2 (\ho-\hd)\right)\\
\nonumber& + \sum_{i=1}^K f_{wp^i}w  \left(H^{(i),1}(\p) (\sigma_1 \ho+\rho\sigma_2\hd)+ H^{(i),2}(\p)\sqrt{1-\rho^2}\hd \sigma_2\right)\\
\label{eq:HJB_pigenerator} & + \sum_{i=1}^K f_{xp^i} \left( (\sigma_1-\rho \sigma_2)  H^{(i),1}(\p) -  H^{(i),2}(\p)\sigma_2\sqrt{1-\rho^2}\right)  \bigg{\}}
\end{align}
for every function $f:[0,T]\times \R^+ \times \R \times \Delta_K \to \R$, which is bounded, differentiable with respect to time and twice differentiable with respect to $(w,x,\p)$ with bounded derivatives. By the form of the utility function we have that the value function has the form $V(t,w,x,\pi)=\log(w)+v(t,x, \pi)$, for some function $v(t,x,\pi)$, such that $v(T,x,\p)=0$ for all $(x,\p)\in (\R\times \Delta_K)$. By inserting the first ansatz in equation \eqref{eq:HJB_pi} and considering the first order condition we get that  the candidate for an optimal strategy is given by \eqref{opt1_pi}, \eqref{opt2_pi},\eqref{opt3_pi}.
Since $V(t,w,x,\p)$ is concave and increasing in $w$, the second order condition implies that \eqref{opt1_pi},\eqref{opt2_pi} and \eqref{opt3_pi} is the maximizer and the optimal portfolio strategy.
Here, we choose $v$ of the form $v(t,x,\p)=\bar m(\p)x^2+\bar n (t,\p)x+\bar u(t,\p)$. Inserting this ansatz in equation \eqref{eq:HJB_pi}   leads to the system of linear partial differential equations in  \eqref{eq:system1_pi}, \eqref{eq:system2_pi}, \eqref{eq:system3_pi}.

\noindent \textit{Verification:}
To conclude that $V$ is the value function, we show a verification result. Let $\widetilde{V}(t,w,x,\p)$ be a solution of \eqref{eq:HJB_pi} with the boundary condition $\widetilde{V}(T,w,x,\p)=\log(w)$.
Let $h\in \mathcal A^{\bF}$ be an $\bF$-admissible control, let $W^h$ the solution to equation \eqref{eq:wealthdyn_pi}.
By applying It\^{o}'s formula we get
\begin{align}
\nonumber& \widetilde{V}(T, W_T^h, X_T,\bpi_T)=  \widetilde{V}(t,z,s,\p)+\int_t^T  \mathcal L_\bF^h \widetilde{V}(u, W^h_u, X_u, \bpi_u)\, \ud u \\
\nonumber&+ \!\!\int_t^T\!\!\!\!  \left(  \widetilde{V}_w(u, W^h_u, X_u, \bpi_u) W^h_u   (\hm_u\!\!+ \!\ho_u\beta_1\!+\! \hd_u  \beta_2 ) \!+\! \widetilde{V}_x(u, W^h_u, X_u, \bpi_u) (\beta_i\!-\!\beta_2) \right)\!\sigma_m \ud B^{(m)}_u\\
\nonumber& +\!\!\int_t^T\!\!\!\!  \left(\widetilde{V}_w(u, W^h_u, X_u, \bpi_u) W^h_u (\sigma_1\ho+\rho \sigma_2 \hd) +  (\sigma_1-\rho\sigma_2) \widetilde{V}_x(u, W^h_u, X_u, \bpi_u)\right)   \ud I^{(1)}_u \\
\nonumber&  +\!\!\int_t^T\!\!\!\!\sum_{i=1}^K\widetilde{V}_{p^i}(u, W^h_u, X_u, \bpi_u)  \bar H^{(i),1}(\bpi_t)    \ud I^{(1)}_u\\
\nonumber& + \!\!\int_t^T\!\!\!\!  \left(\!\sigma_2 \sqrt{1-\rho^2}( \widetilde{V}_w(u, W^h_u, X_u, \bpi_u) W^h_u  \hd_u \!-\!\widetilde{V}_x(u, W^h_u, X_u, \bpi_u)) \right)  \ud I^{(2)}_u\\
\nonumber&+  \!\!\int_t^T\!\! \sum_{i=1}^K\!\widetilde{V}_{p^i}(u, W^h_u, X_u, \bpi_u)  H^{(i), 2} (\pi_t)   \ud I^{(2)}_u.
\end{align}
By equation \eqref{eq:HJB_pi} we get
\begin{align}
\nonumber& \widetilde{V}(T, W_T^h, X_T,\bpi_T)\leq  \widetilde{V}(t,w,x,\p) \\
\nonumber&+ \!\!\int_t^T\!\!\!\!  \left(  \widetilde{V}_w(u, W^h_u, X_u, \bpi_u) W^h_u   (\hm_u\!\!+ \!\ho_u\beta_1\!+\! \hd_u  \beta_2 ) \!+\! \widetilde{V}_x(u, W^h_u, X_u, \bpi_u) (\beta_i\!-\!\beta_2) \right)\!\sigma_m \ud B^{(m)}_u\\
\nonumber& +\!\!\int_t^T\!\!\!\!  \left(\widetilde{V}_w(u, W^h_u, X_u, \bpi_u) W^h_u (\sigma_1\ho+\rho \sigma_2 \hd) +  (\sigma_1-\rho\sigma_2) \widetilde{V}_x(u, W^h_u, X_u, \bpi_u)\right)   \ud I^{(1)}_u \\
\nonumber&  +\!\!\int_t^T\!\!\!\!\sum_{i=1}^K\widetilde{V}_{p^i}(u, W^h_u, X_u, \bpi_u)  \bar H^{(i),1}(\bpi_t)    \ud I^{(1)}_u\\
\nonumber& + \!\!\int_t^T\!\!\!\!  \left(\!\sigma_2 \sqrt{1-\rho^2}( \widetilde{V}_w(u, W^h_u, X_u, \bpi_u) W^h_u  \hd_u \!-\!\widetilde{V}_x(u, W^h_u, X_u, \bpi_u)) \right)  \ud I^{(2)}_u\\
&+  \!\!\int_t^T\!\! \sum_{i=1}^K\!\widetilde{V}_{p^i}(u, W^h_u, X_u, \bpi_u)  H^{(i), 2} (\bpi_t)   \ud I^{(2)}_u.\label{eq:ver2}
\end{align}
Note that stochastic integrals with respect to $B^{(m)},I^{(1)}$ and $I^{(2)}$ are true martingales. This is a consequence of the fact that function  $\widetilde{V}(t,w,x,\p)=\log(w)+\bar m(t)x^2+\bar n(t,\p)x+\bar u(t, \p)$ solves the HJB equation, that $(\hm, \ho, \hd)$ is an $\bF$-admissible strategy and that functions $\bar m(t),\bar n(t,\p), \bar u(t,\p)$ and their derivatives are bounded over the compact interval $[0,T]\times \Delta_K$.
Then taking the expectation on both sides of inequality \eqref{eq:ver2} implies that $V(t,w,x,\p)\leq \widetilde{V}(t,w,x,\p)$.
Moreover if $({\hm}^*,{\ho}^*,{\hd}^
*) $ is a maximizer of equation \eqref{eq:HJB_pi}, then we obtain the equality $V(t,w,x,\p)= \widetilde{V}(t,w,x,\p)$.
\end{proof}

\bibliographystyle{plainnat}
\bibliography{bibtex_optimal_working}

\end{document}